\algnewcommand\algorithmicforeach{\textbf{for each}}
\newcolumntype{C}[1]{>{\centering\arraybackslash$}m{#1}<{$}}
\newlength{\mycw}                                         % array column width
\newcommand{\Nats}{\ensuremath{\mathbb{N}}\xspace}
\newcommand{\trans}[3]{\ensuremath{#1 \xrightarrow{#2} #3}\xspace}
\newcommand{\conf}{\mathbf{q}}
\newcommand{\altconf}{\mathbf{p}}
\newcommand{\wqo}{\preceq}
\newcommand{\altwqo}{\trianglelefteq}
\newcommand{\nice}{synchronization-independent\xspace}
\newcommand{\ie}{i.e.\xspace}
\newcommand{\GS}{Q}
\newcommand{\LS}{S}
\newcommand{\ls}{s}
\newcommand{\guards}{\mathcal{P}(\LS)}
\newcommand{\guard}{G}
\newcommand{\guardset}{\mathcal{\guard}}
\newcommand{\free}{free\xspace}
\newcommand{\senderset}{\hat{\ls}\xspace}
\newcommand{\ra}{$\rightarrow$} 
\newcommand{\st}[1]{\textsc{#1}} 
\newcommand{\tr}[1]{$\mathbf{#1}$} 
\newcommand{\trr}[1]{\mathbf{#1}}
\newcommand{\lemref}[1]{Lem.~\ref{lem:#1}}
\newcommand{\secref}[1]{Sec.~\ref{sec:#1}}
\newcommand{\figref}[1]{Fig.~\ref{fig:#1}}
\newcommand{\exref}[1]{Ex.~\ref{ex:#1}}
\newcommand{\thmref}[1]{Thm.~\ref{thm:#1}}
\newcommand{\appref}[1]{App.~\hyperref[app:#1]{\ref*{app:#1}}}
\newcommand{\tabref}[1]{Tab.~\ref{table:#1}}
\newcommand{\spara}[1]{\medskip \noindent {\bf #1}}
\newcommand{\mtt}[1]{\mathtt{#1}}
\newcommand{\Nat}{\mathbb{N}}
\newcommand\tuple[1]{\langle #1 \rangle}
\newcommand{\consagree}{agreement\xspace}
\newcommand{\GActions}{\mathcal{A}}
\newcommand{\act}{a}
\newcommand{\actname}{\mtt{msg}}
\newcommand{\M}{\mathcal{M}}
\newcommand{\ch}{\textsc{Choose}\xspace} 
\newcommand{\chfig}{\code{cons}\xspace} 
\newcommand{\chset}{\mathcal{S}}	%input set to choose
\newcommand{\supp}{\mathsf{supp}\xspace}
\newcommand{\gbc}{GSP\xspace} %at least for now
\newcommand{\hlm}{\gbc model\xspace}	%Swen's model (high-level model)
\newcommand{\ph}{P_{\gbc}} %local process definition in the gbc model
\newcommand{\amenable}{\textbf{cutoff-amenable}\xspace}
\newcommand{\wellbehaved}{\textbf{well-behaved}\xspace} 
\definecolor{Gray}{gray}{0.85}
\definecolor{LightRed}{rgb}{1,0.7,0.7}
\definecolor{LightGreen}{rgb}{0.7,1,0.7}
\newcommand{\redbox}[1]{\color {red} \setlength{\fboxrule}{1.5pt} \dbox {\color {black}#1}\color {black} \ }
\definecolor{realGreen}{rgb}{0.0, 0.5, 0.0}
\newcommand{\originalgrumblere}[2]{\textcolor{realGreen}{\sl{\bf #1:\\} #2}}
\newcommand{\originalgrumblern}[2]{\textcolor{red}{\sl{\bf #1:} #2}}
\newcommand{\originalgrumblerr}[2]{\textcolor{magenta}{\sl{\bf #1 says:} #2}}
\newcommand{\grumblere}[2]{\originalgrumblere{#1}{#2}}
\newcommand{\grumblern}[2]{\originalgrumblern{#1}{#2}}
\newcommand{\grumblerr}[2]{\originalgrumblerr{#1}{#2}}
\newcommand{\nour}[1]{\grumblern{Nour}{#1}}
 \newcommand{\roopsha}[1]{\grumblerr{Roopsha}{#1}}
 \newcommand{\edit}[1]{\grumblere{\\ Edit}{#1} \\}
\newcommand{\mnx}[1]{}
\newcommand{\Env}{Env\xspace}
\newcommand{\Ask}{Ask\xspace}
\newcommand{\Pick}{Pick\xspace}
\newcommand{\Report}{Report\xspace}
\newcommand{\Idle}{Idle\xspace}
\newcommand{\Choose}{Choose\xspace}
\newcommand{\Smoke}{Smoke\xspace}
\newcommand{\Resett}{Reset\xspace}
\definecolor{bluekeywords}{rgb}{0.13,0.13,1}
\definecolor{greencomments}{rgb}{0,0.5,0}
\definecolor{turqusnumbers}{rgb}{0.17,0.57,0.69}
\definecolor{redstrings}{rgb}{0.5,0,0}
\lstdefinelanguage{FSharp}
    {keywords=[1]{if,else,machine, on, unit, env, contains, goto, Consensus, state, variables, actions, initial,do,Send,Broadcast,bool,Set}, 
    keywordstyle=[1]\color{bluekeywords},
    keywords=[2]{otherstuff},  % this can be used to have other words show in a different style.
    keywordstyle=[2]\color{redstrings},
    sensitive=false,
    morecomment=[l][\color{greencomments}]{///},
    morecomment=[l][\color{greencomments}]{//},
    morecomment=[s][\color{greencomments}]{{(*}{*)}},
    morestring=[b]",
    stringstyle=\color{redstrings}
    }
\newcommand{\code}[1]{\texttt{#1}} %at least for now
\newcommand{\reducespace}{-1pt}
\newcommand{\squeezecaption}{\vspace{\reducespace}}
\renewcommand{\paragraph}{\spara}
\begin{document}
\title{Parameterized Verification of Systems\\ with Global Synchronization and Guards\thanks{This research was partially supported by the National Science
    Foundation under Grant
  Nos.~1846327,~1908504, and~1919197 and by a grant from the Purdue Research
  Foundation. 
  Any opinions, findings, and conclusions  in  this  paper are
  those  of  the  authors  only and  do  not necessarily reflect the views of
  our sponsors.}}

\titlerunning{Parameterized Verification of Systems with Global Synchr. \& Guards}
% If the paper title is too long for the running head, you can set
% an abbreviated paper title here
%
\author{Nouraldin Jaber\inst{1}\thanks{Joint first-authors.} \and
Swen Jacobs\inst{2}$^{\star\star}$ \and
Christopher Wagner\inst{1} \and
Milind Kulkarni\inst{1} \and
Roopsha Samanta\inst{1}
}
\authorrunning{N. Jaber, S. Jacobs, C. Wagner, M. Kulkarni, and R. Samanta}

%Second Author\inst{2,3}\orcidID{1111-2222-3333-4444} \and
%Third Author\inst{3}\orcidID{2222--3333-4444-5555}}
%%
%\authorrunning{F. Author et al.}
%% First names are abbreviated in the running head.
%% If there are more than two authors, 'et al.' is used.
%%
\institute{Purdue University, West Lafayette, USA\\ \email{\{njaber,wagne279,milind,roopsha\}@purdue.edu} \and CISPA Helmholtz Center for Information Security, Saarbr\"{u}cken, Germany \\ \email{jacobs@cispa.saarland}}
%Springer Heidelberg, Tiergartenstr. 17, 69121 Heidelberg, Germany
%\email{lncs@springer.com}\\
%\url{http://www.springer.com/gp/computer-science/lncs} \and
%ABC Institute, Rupert-Karls-University Heidelberg, Heidelberg, Germany
%\email{\{abc,lncs\}@uni-heidelberg.de}}
%%
\maketitle              % typeset the header of the contribution
\begin{abstract}
  Inspired by distributed applications that use consensus or other agreement protocols for global coordination, we define a new computational model for
  parameterized systems that is based on a general global synchronization
  primitive and allows for global transition guards. Our model generalizes many
  existing models in the literature, including broadcast protocols and guarded
  protocols. We show that reachability properties are decidable for systems
  without guards, and give sufficient conditions under which they remain
  decidable in the presence of guards. Furthermore, we investigate cutoffs for
  reachability properties and provide sufficient
  conditions for small cutoffs in a number of cases that are inspired by our
  target applications.
\end{abstract}
\section{Introduction}

Distributed applications are notoriously difficult to implement and reason
about, primarily due to the combinatorial explosion of behaviors resulting from
the interleaving of computation and communication.  Naturally, they have
received a lot of attention from the formal methods community to facilitate
reasoning about correctness properties that are too complex to reason about
informally or manually~\cite{Alur.AutomaticSynthesisDistributed.X.2017,Wilox.Verdi.PLDI.2015,Sergey.ProgrammingProvingDistributed.POPL.2017,Padon.IvySafetyVerification.PLDI.2016,damian.communicationclosed.CAV.2019,Konnov.ShortCounterexampleProperty.X.2017,Hawblitzel.Ironfleet.SOSP.2015,Scalas.VerifyingMessagePassing.PLDI.2019,Gleissenthall.Pretend.Synchrony.POPL.2019,berkovits2019verification,Damm.AutomaticCompositionalSynthesis.X.2014}.

One of the main challenges in {\em fully automated} reasoning about a
distributed system is {\em scalability} in a critical system parameter---the
number of processes---with the epitome of success being {\em parameterized
verification of correctness}---correctness that holds regardless of this
parameter.  Unfortunately, the parameterized verification problem is known to
be undecidable even in very simple cases, for example, finite-state processes
that pass a $2$-valued token in a
ring~\cite{Suzuki.PMCP_UndecidableFirstPPR.1988.PPR}.  Hence, approaches for
parameterized verification are divided into two groups: (i) ones that support 
a large class of systems, but only provide semi-decision
procedures~\cite{abdulla2016parameterized,KaiserKW10} and (ii) ones that
provide fully automatic decision procedures for a well-defined class of
systems, but need to carefully restrict this class of systems to obtain such a
strong result.  While the former cannot provide any guarantee of success, the
latter are often not sufficiently general to model practical examples.

In this work, we target fully-automated parameterized verification for a
significantly more general class of systems than addressed in prior work (cf.
the surveys~\cite{emerson2003model,BloemETAL15,Esparza16}).  Inspired by distributed applications
that use consensus or other {\em agreement protocols} for global coordination, we
introduce \emph{global synchronization protocols}, a new computational model for distributed systems that generalizes 
most of the existing models
based on process synchronization, including models based on pairwise
rendezvous~\cite{GS92}, asynchronous rendezvous~\cite{DelzannoRB02},
negotiation~\cite{ED13} and broadcasts~\cite{EsparzaFM99}.  We show that
despite this generality, we can still decide parameterized verification for
safety properties.  Going beyond that, we show that under certain conditions,
our model can be augmented with global transition guards---which allow
to model semaphore-based access control as well as preconditions for
global consensus-like coordination---while retaining decidability.  This
makes our model one of the most expressive models for which the parameterized
verification problem is still decidable.  
Furthermore, we present several results on
{\em cutoffs} for our model, \ie, the number of processes sufficient to prove
or disprove properties of a parameterized system. 
%We show that cutoffs are, in
%general, at least quadratic in the local state space of a process---even for the
%subclass that corresponds to the broadcast model of Esparza et
%al.~\cite{EsparzaFM99}.
%and therefore we need to consider systems that
%satisfy additional conditions in order to obtain cutoffs that are useful in
%practice. 
Inspired both by the decision procedure and by negative examples
that require large cutoffs, we define sufficient conditions on systems 
in our computational model that make small, practical cutoffs possible.
Finally, we evaluate our approach on several distributed applications, 
showing that they can indeed be modeled as global synchronization protocols, 
and we illustrate the significance of our cutoff results in the verification 
of these benchmarks.

\paragraph{Motivating Example.}
Our system model is inspired by applications that {\em use} agreement
protocols, like leader election or consensus, as building blocks to achieve a
more complex overall functionality. We are interested in a compositional
verification setting where we \emph{assume} that the agreement protocols have
been verified separately and want to {\em guarantee} the overall correctness of
an application without having to explicitly model and verify the agreement
protocols within the application; in particular, we focus on 
a setting where verified agreement protocols are encapsulated 
into an abstraction with precondition obligations and postcondition guarantees. 

Thus, our system model needs to be able to incorporate such 
pre- and postconditions of agreement protocols. As a 
simple example, consider the smoke detector application in 
\figref{MappedIntroExample} whose intended behavior is as follows. Upon detecting smoke, the processes coordinate to choose (up to) 2 processes to report the smoke to the fire department. It uses different types of transitions, several of which are popular in the
literature and are supported by existing decidability results: 
an {\em internal transition} (from state \st{\Env} to state \st{\Ask}), a {\em
broadcast} (on action \tr{\Smoke}), and a {\em negotiation}, i.e., a
synchronous transition of all processes with no distinguished sender (on action
\tr{\Resett}). However, additionally our application requires that some
transitions can only happen under certain conditions, given by guards
$\guard_i$ in transition labels.  For example, action \tr{\Resett} should only
be possible if all processes are in $\guard_3$, i.e., in states \st{\Report} or
\st{\Idle}. And most importantly, in state \st{\Pick} we want the system to
\emph{agree} on (up to) 2 processes that move into state \st{\Report} . This
requires a novel type of transition that we have not found in existing literature,
allowing two processes to take a distinguished role while all other processes
are treated uniformly. To faithfully model agreement of processes, we also
require a guard on this transition, since any agreement protocol is based on
the assumption that all processes are ready (i.e., their local state satisfies
some condition) before invocation of the protocol.

\begin{figure*}[t]
\centering
\begin{minipage}{0.99\linewidth}
\centering
%\begin{tikzpicture}[->,>=stealth',auto,node distance=8cm,semithick,scale=0.6,every node/.style={transform shape}]
\begin{tikzpicture}[>=stealth',auto,node distance=8cm,semithick,scale=0.5,every node/.style={transform shape}]
\tikzstyle{every to}=[font=\fontsize{13}{0}\selectfont, draw = black, align=center]
\tikzstyle{every edge}=[->,font=\fontsize{13}{0}\selectfont, draw = black, align=center]
\tikzstyle{every state}=[minimum size=2cm, font=\fontsize{14}{0}]
\tikzstyle{optionalEdge}=[draw=realGreen]
\tikzstyle{optionalNode}=[text=realGreen]
\tikzstyle{hole}=[rectangle,draw=red,dashed,thick,yshift=3]

%%% node (states) definition%%%%%%
\node (init) at (-2,0) {};
\node[state]	(Env)			at (0,0)	    	{\st{\Env}};
\node[state]	(Ask)			at (5,0)	    	{\st{\Ask}};
\node[state]	(Pick)			at (10,0)	    	{\st{\Pick}};
\node[state]	(Report)		at (16,0)	    	{\st{\Report}};
\node[state]	(Idle)			at (10,-4)	    	{\st{\Idle}};

%%% paths definition%%%%%%

\draw
(init) edge (Env)
(Env)			edge										node		[distance=1cm] 										{$G_1$}										(Ask)
(Env)			edge										node		[above, rotate =-21.8, xshift=10, yshift=3]	{ $\trr{\Smoke}$??}				(Idle)
(Ask)			edge	[bend left] 						node		[above] 													{$\trr{\Smoke}!!, G_1$}		(Pick)
(Ask)			edge	[bend right]						node		[above, yshift=5] 									{$\trr{\Smoke}??$}				(Pick)
(Pick)			edge										node		[right] 													{$\trr{\Choose}??$}				(Idle)
(Pick)			edge	[bend left]						node		[above] 													{$\trr{\Choose_1}!!, G_2$}	(Report)
%(Idle)		edge	[loop right]						node		[] 													 		{$\trr{\Choose}??}				(Idle)
(Pick)			edge	[bend right]						node		[above, yshift=5] 									{$\trr{\Choose_2}!!, G_2$}	(Report)
(Report.90)	edge	[bend right]						node		[above, yshift=3] 									{$\trr{\Resett}, G_3$} 			(Env)
(Idle)			edge	[bend left]						node		[above, rotate =-21.8, yshift=1] 					{$\trr{\Resett}, G_3$} 			(Env)
(Report)     node [below =1.9cm,font=\fontsize{16}{0}\selectfont, xshift = -5]		{$G_1=\{\st{Env},\st{Ask}\}$}
(Report)     node [below =2.7cm,font=\fontsize{16}{0}\selectfont, xshift = -0.3]	{$G_2=\{\st{Pick},\st{Idle}\}$}
(Report)     node [below =3.5cm,font=\fontsize{16}{0}\selectfont, xshift = 11.95]	{$G_3=\{\st{Report},\st{Idle}\}$}
;
  
\end{tikzpicture}     
\end{minipage}

\caption{A smoke detector process. The internal transition from initial state
  \st{\Env} to \st{\Ask} models that a process detects smoke (an environment signal). A process that
  detected smoke can initiate a broadcast \tr{\Smoke}, moving all processes
  from \st{\Env} to \st{\Idle} and from \st{\Ask} to \st{\Pick}, where the
  transition \tr{\Choose} moves (up to) 2 processes to \st{\Report}, and the
  rest from \st{\Pick} to \st{\Idle}.   %If the \consagree from \st{Pick} is a maximal sender transition, then up to 2 processes may move to \st{Report}. If it is a multi-sender transition , then exactly 2 must move to \st{Report}. Optionally, we may consider just one of the actions (say, $b_1!!$, without $b_2!!$ present) as a simple broadcast, and just one process would move to \st{Report}.
Finally, all processes from \st{\Report} and \st{\Idle} may move back to \st{\Env} in a synchronous transition with no dedicated sender. Transitions labeled with a set $\guard_i$ can only be taken if all processes are in this set.
The safety property for a distributed smoke detector based on this process is that at most 2 processes should report the fire.}
\label{fig:MappedIntroExample}

\end{figure*}

\section{System Model: Global Synchronization Protocols}
\label{sec:prelim} \label{sec:basicModel}

We present \emph{global synchronization protocols} (GSPs), 
a formal system model that generalizes most of the existing 
synchronization-based models in the literature~\cite{EsparzaFM99,GS92,DelzannoRB02,ED13}, 
including models based on rendezvous and broadcasts. 
In this model, each global transition synchronizes all processes, 
where an arbitrary number $k$ of processes \emph{act} as the senders of the 
transitions, while the remaining processes react
 uniformly as receivers.
The model supports two basic types of transitions: 
(i) a \emph{$k$-sender transition}, which \emph{can fire} only if at least $k$
 processes are ready to act as senders, and \emph{is fired} with exactly $k$ 
processes acting as senders, and (ii) 
a \emph{$k$-maximal transition}, which \emph{can fire} if the number $m$ of 
processes that are ready to act as senders is at least $1$, and \emph{is 
fired} with $min(m,k)$ processes acting as senders.
%\mn{was originally: ``and \emph{is 
%fired} with $k$ processes acting as senders if $m \geq k$, or with $m$ 
%processes acting as senders, otherwise.''}
%\mnn{switched from max to min}
Additionally, each transition can be equipped with a \emph{global guard} that 
identifies a subset of the local state space. Then, a transition is 
\emph{enabled} whenever it can fire and the local states of all processes are in 
the set identified by the transition guard.

We formalize these notions in the following, starting with the case without 
transition guards.
%\footnote{Note that for presentation clarity, we do not explicitly consider an  environment process in our model. All of our results extend to the case with an explicit environment process; see~\appref{gbc-env}\ for a justification.}
%\mn{Combined footnote here with footnote 2, to avoid introducing E before system processes are defined.}

\subsection{Global Synchronization without Guards}
\label{sec:unguardedGSPs}
\paragraph{Unguarded Processes.} An \emph{unguarded process} is a labeled 
transition system 
$P=\tuple{A,\LS,\ls_0,T}$, where $A$ is a set of \emph{local actions}, $\LS$ 
is a finite set of states, $\ls_0 \in \LS$ is the initial state, and 
$T \subseteq \LS \times A \times \LS$ is the transition relation. 
$A$ is based on a set $\GActions$ of \emph{global actions}, where each 
$a \in \GActions$ has an \emph{arity} $k \geq 1$ and is either a \emph{$k$-sender action} or a \emph{$k$-maximal action}. For every global action $a \in \GActions$ with arity $k$, $A$ 
contains \emph{local actions} $a_1{!!},\ldots,a_k{!!},a{??}$. Actions $a_1{!!},\ldots,a_k{!!}$ are called \emph{sending actions} and $a{??}$ is called a \emph{receiving action}.

A local transition from state $\ls$ to state $\ls'$ on sending action $\alpha \in A$ 
denoted $\trans{\ls}{\alpha}{\ls'}$ is called a \emph{sending transition} 
(resp., \emph{receiving transition}) if $\alpha$ is a sending action (resp., receiving 
action).
We assume that receives are \emph{deterministic}: for each state $\ls$ and each receiving action $a{??}$, there is 
exactly one state $\ls'$ with $\trans{\ls}{a{??}}{\ls'}$, and that sends are \emph{unique}: for each sending action 
$a_i$ there is exactly one pair of states $\ls, \ls'$ with $\trans{\ls}{a_i{!!}}{\ls'}$.
\footnote{Processes that do not satisfy the assumptions can easily be rewritten to satisfy them, e.g. by adding self-loops on any missing receive actions, and by renaming the actions of duplicate sending transitions (and adding corresponding receiving transitions).} %\mnnn{Should we change the footnote to say ``See the extended version of a justification'' and remove the appendix? or just remove the appendix and give no justification?}

\begin{example}
If we ignore guards on transitions, the process in \figref{MappedIntroExample} is an unguarded process. Global action \textbf{Choose} has arity $2$, and local sending transitions $\trans{\st{\Pick}}{\textbf{Choose}_i!!}{\st{\Report}}$ for $i \in \{1,2\}$. One local receiving transition is $\trans{\st{\Pick}}{\textbf{Choose}??}{\st{\Idle}}$, and all other receiving transitions on \textbf{Choose} are self-loops (not depicted).
\end{example}

\paragraph{Unguarded Systems.} %\mn{\roopsha{Maybe combine this with the previous paragraph?}}
Given an unguarded process $P=\tuple{A,\LS,\ls_0,T}$,
we consider systems composed of $n$ identical processes, and use a counter 
abstraction to efficiently represent global states, without loss of 
precision~\cite{EmersonT99}.\footnote{For presentation clarity, we do not explicitly consider an 
environment process in our model. All of our results extend to the case with 
an explicit environment process; see \appref{gbc-env} for a justification.}

That is,
the parameterized global transition system is defined as 
$\M(n)=\tuple{\GActions,Q,\conf_0,\rightarrow}$, where
$Q = \{0,\ldots,n\}^{\LS}$, i.e., a global state is a function $\conf: \LS \rightarrow \{0,\ldots,n\}$. Assuming a fixed order on $\LS$, we will also use $\conf$ as a vector of natural numbers. The initial state $\conf_0$ is the state with $\conf_0(s_0)=n$ and $\conf_0(s)=0$ for all $s\neq s_0$.
Finally, we define the global transition relation $\rightarrow$, separated into the two different types of actions:
%\footnote{Note that full symmetry of our systems, as shown in Lemma~\ref{}, is a precondition for this abstraction of the state space to be complete.}

\noindent \emph{$k$-sender actions.} A $k$-sender action $a \in \GActions$ with local sending transitions $\trans{\ls_i}{a_i!!}{\ls_i'}$ for $i \in \{1,\ldots,k\}$ can be fired from a global state $\conf$ if there are $k$ processes that can take these local transitions.
			Upon firing the action, each of the local transitions on actions $a_i!!$ is taken by exactly one process, and all other processes take a 
			transition on action $a??$ to arrive in the new global state $\conf'$.
Formally, we assign to each $k$-sender action $a \in \GActions$ (i) a vector
  $\textbf{v}_a \in Q$ containing the number of
  expected senders for each state $t \in \LS$: $\textbf{v}_a(t)= | \{
  \trans{\ls}{a_i{!!}}{\ls'} \mid \ls = t \}|$, (ii) a vector
  $\textbf{v}_a'$ containing the number of senders that will be in each state $t \in \LS$ after the
  transition: $\textbf{v}_a'(t)= | \{ \trans{\ls}{a_i{!!}}{\ls'} \mid \ls' = t \}|$, and (iii) a function $M_a: \LS \times \LS \rightarrow \{0,1\}$, where $M_a(\ls,\ls')=1$ if there is a local transition $\trans{\ls}{a??}{\ls'}$, and $M_a(\ls,\ls')=0$ otherwise. We also use $M_a$ as a $|\LS| \times |\LS|$ matrix, called the \emph{synchronization matrix} of action $a$.
%	
  %For instance, assuming that the global action $\trr{Choose}$ in
  %\figref{MappedIntroExample} is a $2$-sender action, then
  %$\textbf{v}_{\trr{\Choose}}(\st{\Pick}) = 2$ and
  %$\textbf{v}'_{\trr{\Choose}}(\st{\Report}) = 2$.

Then, a transition from global state $\conf$ on action $a$ is possible if $\conf(\ls_i) \geq \textbf{v}_a(\ls_i)$ for all $i \in \{1,\ldots,k\}$, and the resulting global state can be computed as 
$$\conf' = M_a \cdot (\conf - \textbf{v}_a) + \textbf{v}_a',$$
			and we write $\trans{\conf}{a}{\conf'}$. Intuitively, $\conf'$ is obtained from $\conf$ by ``removing'' the senders from their local start states, moving all the remaining (receiving) processes to their respective local destination states, and then adding the senders to their appropriate local destination states.
	Note that this representation relies on the assumption that sends are unique and receives are deterministic, which also implies that each column of a synchronization matrix $M_a$ is a unit vector.

\begin{example}
\label{ex:unguarded}
%\mn{style edits?}
Consider the process in \figref{MappedIntroExample}. The synchronization matrix and vectors for action \tr{\Smoke} are shown below, with global states given in the order 
%$\tuple{\st{\textbf{E}nv} ,\ \st{\textbf{A}sk}, \ \st{\textbf{I}dle}, \ \st{\textbf{P}ick}, \ \st{\textbf{R}eport}}$
$\tuple{\st{Env} ,\ \st{Ask}, \ \st{Idle}, \ \st{Pick}, \ \st{Report}}$ (and abbreviated as $\tuple{\st{E} ,\ \st{A}, \ \st{I}, \ \st{P}, \ \st{R}}$).
Notice, for instance, that the first column in $M_{\trr{\Smoke}}$ encodes the local receive transition $\trans{\st{\Env}}{\trr{\Smoke}??}{\st{\Idle}}$. %On the other hand, t
The vector-pair $\textbf{v}_{\trr{\Smoke}}$ and $\textbf{v}'_{\trr{\Smoke}}$ encode the local send transition $\trans{\st{\Ask}}{\trr{\Smoke}!!}{\st{\Pick}}$. In particular, $\textbf{v}_{\trr{\Smoke}}$ indicates that the sender starts in \st{\Ask} and $\textbf{v}'_{\trr{\Smoke}}$ indicates that the sender moves to \st{\Pick}.
\[
  \begin{array}{C{\mycw}}
        \st{E} \\ \st{A} \\ \st{I} \\ \st{P} \\ \st{R} \\
  \end{array}
  \stackrel{
  	\stackrel{
  		\mbox{$M_{\trr{\Smoke}}$}
	}{\begin{array}{C{\mycw}C{\mycw}C{\mycw}C{\mycw}C{\mycw}}
        \st{E} & \st{A} & \st{I} & \st{P} & \st{R} \\
  \end{array}}
  }{
  \left[ {\begin{array}{C{\mycw}C{\mycw}C{\mycw}C{\mycw}C{\mycw}}
        0 & 0 & 0 & 0 & 0 \\
        0 & 0 & 0 & 0 & 0 \\
        1 & 0 & 1 & 0 & 0 \\
        0 & 1 & 0 & 1 & 0 \\
        0 & 0 & 0 & 0 & 1 \\
  \end{array} } \right]
  }
  \quad
  \begin{array}{C{\mycw}}
        \st{E} \\ \st{A} \\ \st{I} \\ \st{P} \\ \st{R} \\
  \end{array}
  \hspace{-1em}
  \stackrel{\stackrel{\mbox{$\textrm{\textbf{v}}_{\bf \Smoke}$}}{\begin{array}{C{\mycw}}   \\\end{array}}}{
  \left[ {\begin{array}{C{\mycw}}
        0 \\ 1 \\ 0 \\ 0 \\ 0 \\
  \end{array} } \right]
   }
   \quad
   \begin{array}{C{\mycw}}
        \st{E} \\ \st{A} \\ \st{I} \\ \st{P} \\ \st{R} \\
   \end{array}
   \hspace{-1em}
   \stackrel{\stackrel{\mbox{$\textrm{\textbf{v}}'_{\bf \Smoke}$}}{\begin{array}{C{\mycw}}   \\\end{array}}}{
  \left[ {\begin{array}{C{\mycw}}
        0 \\ 0 \\ 0 \\ 1 \\ 0 \\
  \end{array} } \right]
   }
\]

\noindent Now, consider a global state $\tuple{3, 2, 0, 0, 0}$ with three processes in \st{\Env} and two in \st{\Ask}. From this state, the transition $\trans{\tuple{ 3, 2, 0, 0, 0}}{\trr{~\Smoke}}{\tuple{ 0, 0, 3, 2, 0}}$ is enabled (since there is at least 1 sender in $\st{Ask}$), where all three processes in \st{\Env} act as receivers to move to \st{\Idle} (according to the synchronization matrix $M_{\trr{\Smoke}}$), one process in \st{\Ask} acts as the sender to move to \st{\Pick}, and the other process in \st{\Ask} acts as a receiver, also moving to \st{\Pick}. 

\end{example}

\noindent \emph{$k$-maximal actions.}
A $k$-maximal action $a \in \GActions$ with local sending transitions $\trans{\ls_i}{a_i!!}{\ls_i'}$ for $i \in \{1,\ldots,k\}$ can be fired from a global state $\conf$ if there is at least one process that can take one of these local transitions.
Upon firing the action, for each state $\ls_i$ with at least one local transition 
			$\trans{\ls_i}{a_i!!}{\ls_i'}$, (i) if $\conf(\ls_i) \geq \textbf{v}_a(\ls_i)$ then each of the local transitions $\trans{\ls_i}{a_i!!}{\ls_i'}$ is taken by exactly one process, or, (ii) if $\conf(\ls_i) < \textbf{v}_a(\ls_i)$ then a total of $\conf(\ls_i)$ of the local transitions $\trans{\ls_i}{a_i!!}{\ls_i'}$ are taken, each by exactly one process.
			All other processes 
			take a transition on the receiving action $a??$ to arrive in the new global state $\conf'$.
			Formally, we again assign to each action $a$ vectors $\textbf{v}_a, \textbf{v}'_a$ and a synchronization matrix $M_a$, as above. If $\conf(\ls_i) \geq \textbf{v}_a(\ls_i)$ for all $i \in \{1,\ldots,k\}$, then these are used as defined above. For cases where this does not hold, we assign to the action an additional set of vector-pairs $(\textbf{u}_a,\textbf{u}'_a)$ with different numbers of senders that actually participate, and $\conf'$ is computed based on a vector-pair with the maximal number of senders that is supported by $\conf$.
			%\swen{long version of formalization in comments}
			
			%Formally, we assign to a $k$-maximal action $a \in \GActions$ a vector $\textbf{v}_a$ for the pre-states of expected senders, a vector $\textbf{v}'_a$ for the post-states of the senders, and a synchronization matrix $M_a$, which are defined exactly as for $k$-sender transitions. However, the new state is only computed as 
			%$$\conf' = M_a \cdot (\conf - \textbf{v}_a) + \textbf{v}_a'$$
%if $\conf(\ls_i) \geq \textbf{v}_a(\ls_i)$ for all $i \in \{1,\ldots,k\}$. For cases where $\conf(\ls_i) < \textbf{v}_a(\ls_i)$ for one or more $i \in \{1,\ldots,k\}$, we additionally assign to $a$ a set of pairs of vectors $(\textbf{u}_a,\textbf{u}'_a)$ such that (i) $\textbf{u}_a \leq \textbf{v}_a$, (ii) $\textbf{u}'_a \leq \textbf{v}_a'$, and (iii) $\textbf{u}_a(\ls_i) \geq 1$ for at least one $i \in \{1,\ldots,k\}$, such that these pairs cover all cases where not all senders are present. Then, if $\conf(\ls_i) < \textbf{v}_a(\ls_i)$ for one or more $i \in \{1,\ldots,k\}$, the new state is computed as 
			%$$\conf' = M_a \cdot (\conf - \textbf{u}_a) + \textbf{u}_a'$$
			%for some $(\textbf{u}_a,\textbf{u}'_a)$, with $\textbf{u}_a(\ls_i)=\conf(\ls_i)$ for all $i \in \{1,\ldots,k\}$ with $\conf(\ls_i)<\textbf{v}_a$. Note that for a given $\textbf{u}_a$, there may be two pairs $(\textbf{u}_a,\textbf{u}'_a)$ and $(\textbf{u}_a,\textbf{u}''_a)$ with $\textbf{u}'_a \neq \textbf{u}''_a$ that both correspond to valid $k$-maximal transitions.
						%We write $\trans{\conf}{a}{\conf'}$.}

\begin{example}
\label{ex:unguarded2}
%%%%%%%%%%%%%%%%%%%%%%%%%%%%%%%%%%%%
The synchronization matrix and vectors for action \tr{\Choose} are shown below. Note that, if \tr{\Choose} is a $2$-maximal action, then the vector-pair $(\textbf{u}_{\trr{\Choose}},$ $\textbf{u}'_{\trr{\Choose}})$ is used to model
the case where only one sender is available to take the sending transition.
\[
  \begin{array}{C{\mycw}}
        \st{E} \\ \st{A} \\ \st{I} \\ \st{P} \\ \st{R} \\
  \end{array}
  \stackrel{\stackrel{\mbox{$M_{\trr{\Choose}}$}}{\begin{array}{C{\mycw}C{\mycw}C{\mycw}C{\mycw}C{\mycw}}
        \st{E} & \st{A} & \st{I} & \st{P} & \st{R} \\
  \end{array}}}{
  \left[ {\begin{array}{C{\mycw}C{\mycw}C{\mycw}C{\mycw}C{\mycw}}
        1 & 0 & 0 & 0 & 0 \\
        0 & 1 & 0 & 0 & 0 \\
        0 & 0 & 1 & 1 & 0 \\
        0 & 0 & 0 & 0 & 0 \\
        0 & 0 & 0 & 0 & 1 \\
  \end{array} } \right]
  }
  \quad
   \begin{array}{C{\mycw}}
        \st{E} \\ \st{A} \\ \st{I} \\ \st{P} \\ \st{R} \\
   \end{array}
   \hspace{-1em}
  \stackrel{\stackrel{\mbox{$\textrm{\textbf{u}}_{\bf \Choose}$}}{\begin{array}{C{\mycw}}   \\\end{array}}}{
  \left[ {\begin{array}{C{\mycw}}
        0 \\ 0 \\ 0 \\ 1 \\ 0 \\
  \end{array} } \right]
   }
   \quad
    \begin{array}{C{\mycw}}
        \st{E} \\ \st{A} \\ \st{I} \\ \st{P} \\ \st{R} \\
   \end{array}
   \hspace{-1em}
   \stackrel{\stackrel{\mbox{$\textrm{\textbf{u}}'_{\bf \Choose}$}}{\begin{array}{C{\mycw}}   \\\end{array}}}{  \left[ {\begin{array}{C{\mycw}}
        0 \\ 0 \\ 0 \\ 0 \\ 1 \\
  \end{array} } \right]
   }
   \quad
   \begin{array}{C{\mycw}}
        \st{E} \\ \st{A} \\ \st{I} \\ \st{P} \\ \st{R} \\
   \end{array}
   \hspace{-1em}
   \stackrel{\stackrel{\mbox{$\textrm{\textbf{v}}_{\bf \Choose}$}}{\begin{array}{C{\mycw}}   \\\end{array}}}{  \left[ {\begin{array}{C{\mycw}}
        0 \\ 0 \\ 0 \\ 2 \\ 0 \\
  \end{array} } \right]
   }
   \quad
   \begin{array}{C{\mycw}}
        \st{E} \\ \st{A} \\ \st{I} \\ \st{P} \\ \st{R} \\
   \end{array}
   \hspace{-1em}
   \stackrel{\stackrel{\mbox{$\textrm{\textbf{v}}'_{\bf \Choose}$}}{\begin{array}{C{\mycw}}   \\\end{array}}}{  \left[ {\begin{array}{C{\mycw}}
        0 \\ 0 \\ 0 \\ 0 \\ 2 \\
  \end{array} } \right]
   }
\]

\noindent Regardless of whether \tr{\Choose} is a $2$-sender or a $2$-maximal action, the global transition $\trans{\tuple{ 0, 0, 1, 4, 0}}{\trr{\Choose}}{\tuple{ 0, 0, 3, 0, 2}}$ is possible. In a state $\conf = \tuple{ 0, 0, 4, 1, 0}$, with 4 processes in \st{\Idle} and 1 in \st{\Pick}, the \tr{\Choose} action will not be enabled if it is a $2$-sender action because two sending processes are required (in \st{\Pick}), but only one sender is available. However, if \tr{\Choose} is a $2$-maximal action, then the global transition $\trans{\tuple{ 0, 0, 4, 1, 0}}{\trr{\Choose}}{\tuple{ 0, 0, 4, 0, 1}}$ is possible.

\end{example}

\paragraph{Runs, Reachability Properties.}
A \emph{run} of system $\M(n)$ is a finite or infinite sequence of global 
states $\conf_0 \conf_1 \ldots$, where $\conf_0$ is the initial state and 
$\trans{\conf_i}{a}{\conf_{i+1}}$ for all $i$. We say that a state $\conf$ 
is \emph{reachable} in $\M(n)$ if there is a run of $\M(n)$ that ends in 
$\conf$.
For a fixed $m \in \Nat$ and local state $\ls \in \LS$, let $\phi_m(\ls)$ be a property denoting the 
reachability of a global state $\conf$ with $\conf(\ls)\geq m$. If such a 
state is reachable in $\M(n)$, we write $\M(n) \models \phi_m(\ls)$.
%\mn{\roopsha{Why is this interesting/useful?}}

\paragraph{Other Communication Primitives in the \gbc Model.}
%\mn{add informal defs to all?}
Note that most of the synchronization-based communication primitives from the 
literature are instances of $k$-sender transitions or $k$-maximal 
transitions: \emph{broadcasts}~\cite{EsparzaFM99} are simply $1$-sender transitions, 
\emph{internal transitions} are $1$-sender transitions with $M_a=Id$ (the 
identity matrix), \emph{pairwise rendezvous transitions}~\cite{GS92} are 
$2$-sender transitions (denoting the sender and receiver of the rendezvous transition) with $M_a=Id$, \emph{asynchronous rendezvous transitions}~\cite{DelzannoRB02} are $2$-maximal transitions with $M_a=Id$.
%\emph{negotiations}~\cite{ED13} are $0$-sender transitions.
\emph{Negotiations}~\cite{ED13}, i.e., a synchronous transition of all processes with
no distinguished sender, can be modeled as a set of $1$-sender 
transitions, where every local receiving transition $\trans{\ls}{a??}{\ls'}$ 
is paired with a sending transition $\trans{\ls}{a!!}{\ls'}$, allowing an 
arbitrary process to act as the sender. 
%\mn{\swen{introduce special notation for internal transitions and negotiations?}}
In addition to these, GSPs allow us to express many other natural 
synchronization primitives, e.g., summarizing the election of (up to) $k$ 
leaders in a single step.

Finally, \emph{disjunctive guards}~\cite{EK00}, i.e., 
guards $\guard \subseteq \LS$ that require that there \emph{exists} a process 
that is in some state $\ls \in \guard$, can be modeled by adding an auxiliary 
sending action $a_{\guard}!!$, and transitions $\trans{\ls}{a_{\guard}!!}{M_a(
\ls)}$ for 
every $\ls \in \guard$, i.e., a process in some state $\ls \in \guard$ must 
exist to enable the 
transition, but apart from that this process acts like a receiver. Note that this works 
without adding a notion of guards to our model.

In what follows, we extend our model to allow \emph{conjunctive guards}, i.e., guards that require that \emph{all} processes are in some subset of the local state space.

\subsection{Global Synchronization with Guards}

\paragraph{Guarded Processes.} 
%\label{sec:extendedmodel}
A \emph{guarded process} is a tuple
$\ph=\tuple{A,\LS,\ls_0,T}$, where all components are as before, except that 
now we have $T \subseteq \LS \times A \times \guards \times \LS$, i.e., 
transitions are additionally labeled with a subset of $\LS$, called a \emph{guard}. 
A local transition from state $\ls$ to state $\ls'$ on action $\alpha$ with 
guard $\guard$
will be denoted $\trans{\ls}{\alpha,\guard}{\ls'}$.
We call a guard $\guard$ \emph{non-trivial} if $\guard \neq \LS$.
Wlog, we assume that for any global action $a$, all local transitions based on $a$
have the same guard.
 
%\noindent \emph{Composition of processes with guards.}

\paragraph{Guarded Systems.}
Let the \emph{support} of a global state $\conf$ be $\supp(\conf) = \{ \ls \in \LS 
\mid \conf(\ls) > 0 \}$, i.e., the set of local states that appear at least 
once in $\conf$. %\nour{might need to extend this to also be over bags}
Then the semantics of a global transition on action $a$ with 
guard $G$, denoted $\trans{\conf}{a, \guard}{\conf'}$, is as defined before, 
except that the transition is enabled only if $\supp(\conf) \subseteq G$.

\begin{example}
Consider the global transitions introduced in \exref{unguarded}, and recall that global states are given in the order $\tuple{\st{\Env} ,\ \st{\Ask}, \ \st{\Idle}, \ \st{\Pick}, \ \st{\Report}}$. 
While the transition $\trans{\tuple{ 0, 0, 1, 4, 0}}{\trr{~\Resett~}}{\tuple{ 1, 0, 0, 4, 0}}$ would be possible in the unguarded model, the guard $G_3 = \{\st{\Report},\st{\Idle}\}$ on the \tr{\Resett} action disables this transition, as $\supp(\tuple{ 0, 0, 1, 4, 0})  = \{\st{\Pick},\st{\Idle}\} \not \subseteq G_3$.
Similarly, from $\conf = \tuple{1 , 0 , 1 , 2 ,0}$, while a transition on action ${\bf \Choose}$ is enabled for unguarded processes, the guard $G_2=\{\st{Pick},\st{Idle}\}$ on action \tr{\Choose} disables this transition, since $\supp(\tuple{1 , 0 , 1 , 2 ,0}) \not \subseteq G_2$.
%
%Furthermore, in a state 
% as there are 4 processes in \st{\Pick} $\notin G_3$.
%
\end{example}

\section{Parameterized Verification for GSPs without Guards}
\label{sec:parametrizedSafety}

In this section, instead of the parameterized system $\M(n)$,
we consider an infinite-state system $\M_\infty$ that includes the behaviors 
of $\M(n)$ for
every $n$: it initializes to $\M(n)$ for arbitrary $n \in \Nats$, and then
behaves according to the semantics of a GSP of that size. We are 
interested in reachability properties $\phi_m(\ls)$, where $\M_\infty 
\models \phi_m(\ls)$ is equivalent to $\exists n.~ \M(n) \models \phi_m(s)$, 
i.e., we are considering a \emph{parameterized reachability property} over 
all instances of $\M$.

We use this slightly different
model in order to make use of the notion of \emph{well-structured transition
  systems (WSTS)}, as defined by Finkel \cite{Finkel87}: an infinite-state 
transition
system that is equipped with a \emph{well-quasi-order (WQO)} on its state space
and has some additional properties. Finkel and Schnoebelen~\cite{FS01} have 
surveyed existing results on WSTSs and put them into a common
framework. 

We will show that, for a suitable WQO, $\M_\infty$ 
is a WSTS, and that this enables parameterized verification for reachability 
properties $\phi_m(\ls)$.

%guarded broadcast protocols are WSTSs, and that we can decide the parameterized verification problem 
%parameterized model checking problem (PMCP) 
%for safety properties. \\

%\subsection{The Well-Quasi Order}
%
%Given the set $C \subset Q$ that defines the partition of local states, we define the WQO on global states as 
%$$\conf \wqo \conf' \text{ iff } \left(\conf(q) \leq \conf'(q) \text{ for all } q \in Q \text{ and } \sum_{q \notin C}\conf(q)=0 \text{ iff } \sum_{q \notin C}\conf(q)=0\right).$$
%
%That is, $\conf'$ must contain at least as many processes as $\conf$ in every local state, and the two configurations must agree on whether all processes are in $C$.

\subsection{Compatibility and Effective Computability of Predecessors}
%To prove decidability of the PMCP,
For the following definitions, fix an infinite set of states $\GS$ and a transition relation 
$\rightarrow$. Moreover, let $\wqo$ be a WQO on $\GS$, i.e., a reflexive and transitive relation such that, for any infinite sequence $\conf_0, \conf_1, \conf_2, \ldots$ of states from $\GS$, there exist indices $i < j$ with $\conf_i \wqo \conf_j$. In particular, $\wqo$ does not admit infinitely decreasing sequences or infinite anti-chains.

\paragraph{Compatibility.} We say that $\wqo$ is \emph{compatible} with
$\rightarrow$ if for every $\conf, \conf', \altconf \in \GS$ with $\conf \wqo
\altconf$ and $\conf \rightarrow \conf'$ there exists $\altconf' \in \GS$ with
$\conf' \wqo \altconf'$ and $\altconf \rightarrow^* \altconf'$. If the 
property also holds after replacing $\altconf \rightarrow^* \altconf'$ with 
$\altconf \rightarrow \altconf'$, then we say $\wqo$ is \emph{strongly 
compatible} with $\rightarrow$.

\paragraph{Well-Structured Transition System.} A transition system $(\GS,\rightarrow)$
equipped with a WQO that is compatible with $\rightarrow$ is called a
\emph{well-structured transition system (WSTS)}.

\paragraph{Upwards-Closed Sets.}
For a (possibly infinite) subset $U \subseteq \GS$, the \emph{upwards closure} of $U$ is the set
$\uparrow U = \{ \altconf \in \GS \mid \exists \conf \in U: \conf \wqo \altconf\}$. A set $U$ is \emph{upwards closed} if $\uparrow U = U$. Every 
upwards closed set $U$ has a finite \emph{basis}: a finite set $B \subseteq U$
 such that $\uparrow B = U$. 
%We also denote the basis of $U$ by $basis(U)$.

\paragraph{Effectively Computable Predecessors.}  
For $U \subseteq \GS$, let $Pred(U)$ denote the 
predecessor states of $U$ with respect to $\rightarrow$. 
We say that we can \emph{effectively compute $Pred$} if there exists an algorithm that computes a finite basis of $Pred(U)$ from any finite basis of any upwards-closed $U \subseteq \GS$.

\begin{theorem}[\cite{FS01}]
In a WSTS with effectively computable $Pred$, reachability of any upwards-closed set is decidable.
\end{theorem}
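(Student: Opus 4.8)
The plan is to prove this via the standard \emph{backward reachability} (saturation) algorithm for WSTSs. Fix the designated (initial) state $\conf_0$ and an upwards-closed target $U \subseteq \GS$ given by a finite basis. Reachability of $U$ from $\conf_0$ is equivalent to $\conf_0 \in \pred^*(U)$, where $\pred^*(U)$ denotes the set of all states that can reach some state of $U$ in finitely many steps. The idea is to compute $\pred^*(U)$ symbolically as the limit of an ascending chain of upwards-closed sets, each stored as a finite basis, and then decide membership of $\conf_0$ in that limit.

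Concretely, I would define the iteration $U_0 = U$ and $U_{i+1} = {\uparrow}(U_i \cup \pred(U_i))$. The first step is to show that every $U_i$ is upwards-closed and effectively representable. Upwards-closedness of $U_0$ is the hypothesis on $U$; the inductive step follows from compatibility: if $\conf$ reaches $U_i$ and $\conf \wqo \altconf$, then compatibility yields a run from $\altconf$ reaching a state $\altconf'$ with a counterpart $\conf' \wqo \altconf'$ in $U_i$, and upwards-closedness of $U_i$ gives $\altconf' \in U_i$, so $\altconf$ reaches $U_i$ as well. Effectiveness comes from the two hypotheses: the effective-$\pred$ assumption turns a finite basis of $U_i$ into a finite basis of $\pred(U_i)$, and a finite basis of the union of two finitely-based upwards-closed sets is obtained by concatenating the bases and discarding $\wqo$-redundant elements, which is computable because $\wqo$ is decidable.

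The second step is termination. The chain $U_0 \subseteq U_1 \subseteq \cdots$ is ascending, and the decisive property of a WQO is that it admits no infinite strictly ascending chain of upwards-closed sets: from a hypothetical strict chain one picks $x_i \in U_{i+1} \setminus U_i$, and for $i < j$ one cannot have $x_i \wqo x_j$ (else $x_i \in U_{i+1} \subseteq U_j$ upwards-closed would force $x_j \in U_j$, contradicting $x_j \notin U_j$), so $\{x_i\}$ is an infinite antichain, contradicting the WQO. Hence there is $N \in \Nats$ with $U_{N+1} = U_N =: U^*$. Stabilization is detectable because $\wqo$ is decidable: since $U_N \subseteq U_{N+1}$ always holds, it suffices to test $U_{N+1} \subseteq U_N$ by checking, for each basis element $b$ of $U_{N+1}$, whether some basis element of $U_N$ is $\wqo b$. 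Once stabilized, I would answer ``reachable'' exactly when $\conf_0 \in U^*$, again decidable via the decidability of $\wqo$.

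It remains to argue correctness, namely $U^* = \pred^*(U)$: the ``$\subseteq$'' direction is immediate since each $U_i \subseteq \pred^*(U)$ by definition of $\pred$, and ``$\supseteq$'' follows by induction on run length (a state reaching $U$ in $k$ steps lies in $U_k$). I expect the main obstacle to be making the upward-closure bookkeeping rigorous under the \emph{weak} form of compatibility: a single one-step predecessor set $\pred(U_i)$ need not itself be upwards-closed, so the argument must be carried out at the level of $\pred^*$ (and one must confirm $\pred^*(U)$ is genuinely upwards-closed, which is what guarantees it has a finite basis and that the iteration converges to it). Working throughout with explicit upward closures ${\uparrow}(\cdot)$, and invoking compatibility only to establish upwards-closedness of the accumulated set rather than of individual predecessor layers, resolves this subtlety.
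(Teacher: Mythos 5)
Your proposal is correct, but note that the paper itself gives no proof of this statement: it is imported from Finkel and Schnoebelen~\cite{FS01}, and your backward-reachability saturation argument (iterating a pred-basis computation with explicit upward closures, with termination from the fact that a WQO admits no infinite strictly ascending chain of upwards-closed sets, and with compatibility invoked only to show that $Pred^*(U)$ is upwards-closed) is essentially the proof given in that reference, so the approaches coincide. The only hypothesis you use silently is decidability of the ordering itself, needed for your basis-inclusion and membership tests; \cite{FS01} lists it as an explicit assumption, the paper's statement leaves it implicit, and it holds trivially for the concrete orders $\wqo$ and $\altwqo$ used in this paper (also, the sequence you build in the termination argument is a bad sequence rather than necessarily an antichain, but it contradicts the WQO definition all the same).
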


\subsection{Decidability for Unguarded GSPs}
\label{sec:decUGSPs}
We prove that any unguarded GSP is a WSTS with effectively computable $Pred$, 
which implies that reachability properties are decidable for GSPs. 
To this end, let $\wqo$ be the component-wise order on global state vectors 
$\conf$, $\altconf$:
$$\conf \wqo \altconf \; \text{ iff } \; \conf(\ls) \leq \altconf(\ls) \text{ for 
all } \ls \in \LS.$$

Note that with respect to this WQO, the set of global states $\conf$ with 
$\conf(\ls) \geq m$ is an upwards-closed set, i.e., if we can decide 
reachability of upwards-closed sets, then we can decide reachability 
properties $\phi_m(\ls)$. Thus, decidability
of checking $\M_\infty \models \phi_m(\ls)$
%\new{of parameterized verification for reachability properties $\phi_m(\ls)$}
 follows from the following 
theorem.

%We will show that, for a suitable WQO, $\M_\infty$ 
%is a WSTS, and that this enables parameterized verification for reachability 
%properties $\phi_m(\ls)$.

\begin{theorem}
\label{thm:generalizedBC} 
If $\M_\infty$ is based on an unguarded GSP process, then
$\M_\infty$ equipped with $\wqo$ is a WSTS and we can effectively compute Pred.
\end{theorem}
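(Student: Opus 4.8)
The plan is to establish two things: that $\M_\infty$ with $\wqo$ is a WSTS, and that $Pred$ is effectively computable. A global state of $\M_\infty$ is a vector in $\Nat^{\LS}$ and $\wqo$ is the component-wise order, so the fact that $\wqo$ is a WQO is precisely Dickson's Lemma. For the rest I would aim to prove \emph{strong} compatibility, since it yields ordinary compatibility (hence the WSTS property) and, as noted below, also guarantees that $Pred$ preserves upward-closedness, which is exactly what the second part of the claim requires.

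For strong compatibility, fix $\conf \wqo \altconf$ and a transition $\conf \xrightarrow{a} \conf'$; I must produce $\altconf'$ with $\altconf \xrightarrow{a} \altconf'$ and $\conf' \wqo \altconf'$. First, enabledness is monotone: a $k$-sender action needs $\conf(\ls_i) \ge \textbf{v}_a(\ls_i)$ for each sender, and a $k$-maximal action needs only one available sender, and both conditions persist at the larger state $\altconf$, so $a$ is enabled at $\altconf$. For a $k$-sender action the computation is then immediate: firing $a$ from $\altconf$ gives $\altconf' = M_a \cdot (\altconf - \textbf{v}_a) + \textbf{v}'_a$, so $\altconf' - \conf' = M_a \cdot (\altconf - \conf)$, which is non-negative because $M_a$ is a non-negative matrix and $\altconf - \conf \ge 0$; this already establishes strong compatibility for this case.

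The delicate case, which I expect to be the main obstacle, is the $k$-maximal action, since the number of participating senders depends on the source state and may be strictly larger at $\altconf$ than at $\conf$, so the two firings need not use the same vector-pair. The key observation I would prove is that, for each source state $\ls_i$, both the number of senders $\min(\conf(\ls_i), \textbf{v}_a(\ls_i))$ and the number of receivers $\conf(\ls_i) - \min(\conf(\ls_i), \textbf{v}_a(\ls_i))$ are monotone non-decreasing in $\conf(\ls_i)$; this follows from a short case split on whether $\conf(\ls_i)$ is below or at least $\textbf{v}_a(\ls_i)$. Hence when firing the maximal transition from $\altconf$ I can match every receiver of $\conf$ with a receiver of $\altconf$ travelling the same deterministic receive edge (a column of $M_a$), and I can resolve the non-determinism in the partial sends so that the sending transitions taken at $\altconf$ form a superset of those taken at $\conf$. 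The receiver contribution and the sender contribution to $\altconf'$ then each dominate the corresponding contribution to $\conf'$ coordinate-wise, so $\conf' \wqo \altconf'$, completing strong compatibility.

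Finally, I compute $Pred$. Strong compatibility makes $Pred(U)$ upwards-closed for every upwards-closed $U$: if $\conf \xrightarrow{a} \conf'$ with $\conf' \in U$ and $\conf \wqo \altconf$, then $\altconf \xrightarrow{a} \altconf'$ with $\conf' \wqo \altconf' \in U$, so $\altconf \in Pred(U)$. Given a finite basis $B$ of $U$, I would compute a basis of $Pred(U)$ as the union, over each target $\conf' \in B$ and each action $a$, of the minimal elements of the upwards-closed set $\{ \conf \mid \exists \conf'' : \conf' \wqo \conf'' \text{ and } \conf \xrightarrow{a} \conf'' \}$. For a $k$-sender action this set is $\{ \conf \ge \textbf{v}_a \mid M_a \cdot (\conf - \textbf{v}_a) + \textbf{v}'_a \ge \conf' \}$; because each column of $M_a$ is a unit vector, the inequality $M_a \cdot \textbf{x} \ge \conf' - \textbf{v}'_a$ decouples into one covering constraint per target coordinate, each having finitely many component-wise minimal non-negative integer solutions (charge each required unit to a single source state mapping there), and re-adding $\textbf{v}_a$ yields finitely many minimal predecessors. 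For a $k$-maximal action I would run the same inversion for each of the finitely many vector-pairs and take the union of the resulting minimal elements; the maximality condition only restricts the sources further, and since $Pred(U)$ is upwards-closed the union is still a valid finite basis. Every step is an effective integer-linear computation, so $Pred$ is effectively computable, which completes the proof.
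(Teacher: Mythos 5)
Your overall route is the same as the paper's: Dickson's lemma for the WQO, \emph{strong} compatibility proved by a case split into $k$-sender and $k$-maximal actions, and $Pred$ computed by inverting the affine update map, with a finite case distinction over vector-pairs for $k$-maximal actions. The compatibility half of your proof is correct and matches the paper's argument (monotonicity of both the sender count $\min(\conf(\ls_i),\textbf{v}_a(\ls_i))$ and the receiver count, choosing a superset of fired sends at the larger state $\altconf$), your observation that strong compatibility makes $Pred(U)$ itself upwards-closed is sound, and the decoupled covering-constraint inversion for $k$-sender actions is fine.

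The gap is in the final step, the $Pred$ computation for $k$-maximal actions. Running ``the same inversion'' for a partial vector-pair $(\textbf{u}_a,\textbf{u}'_a)$, i.e.\ imposing only $\textbf{u}_a \wqo \conf$ and $M_a\cdot(\conf-\textbf{u}_a)+\textbf{u}'_a \geq \conf'$, is unsound: a state satisfying these constraints need not be able to fire $a$ with that set of senders, because the $k$-maximal semantics \emph{forces} every available potential sender to participate. A predecessor that uses only some of the sends must therefore also satisfy the \emph{absence} constraints $\conf(\ls_i)=\textbf{u}_a(\ls_i)$ at every partially participating source state, and these are exactly what your inversion drops. Concretely, take a $2$-maximal action $a$ with sending transitions $\trans{s_1}{a_1!!}{d_1}$ and $\trans{s_1}{a_2!!}{d_2}$, receiving transition $\trans{s_1}{a??}{r}$ (self-loop receives elsewhere), and a basis element $\conf'$ with one process in $d_1$ and one in $r$. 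Your inversion for the pair firing only $a_1$ returns the ``minimal predecessor'' with $\conf(s_1)=2$ (one sender to $d_1$, one receiver to $r$); but from that state the semantics fires \emph{both} sends, yielding one process in $d_1$, one in $d_2$, and none in $r$, so this state is not in $Pred(\uparrow\conf')$ at all, while genuine predecessors (for instance $\conf(s_1)=3$) lie strictly above it. Your appeal to upwards-closedness of $Pred(U)$ cannot repair this: it guarantees that a finite basis \emph{exists}, but it does not turn an over-approximating set of minimal elements into a basis of $Pred(U)$, and the backward-reachability decision procedure needs $Pred$ exactly, not over-approximated. The repair is what the paper's terse phrase ``different subsets of the senders being present or not'' encodes: add the absence constraints to each per-pair system; each constrained system still has finitely many computable minimal solutions, the union of the constrained solution sets over all vector-pairs is then exactly the set of predecessors via $a$, and this union is upwards-closed by strong compatibility, so the minimal elements of the union do form a valid finite basis.
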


\begin{proof}
To prove that $\M_\infty$ is a WSTS, we show strong compatibility of 
transitions w.r.t. $\wqo$. %, i.e., \mnnn{\chris{is this redefinition of strong compatibility needed?}\swen{remove if we need space, otherwise it is a help for the reader}}
%if $\conf \wqo \altconf$ and $\conf \rightarrow \conf'$, 
%then $\exists \altconf'$ with $\conf' \wqo \altconf'$ and $\altconf 
%\rightarrow \altconf'$.
We consider the following two cases separately: (i) $k$-sender 
transitions, and (ii) $k$-maximal transitions.

(i) For $k$-sender transitions, 
let $\conf \wqo \altconf$ and $\trans{\conf}{a}{\conf'}$ for some 
$k$-sender action $a$. Then $\conf' = M_a \cdot (\conf - \textbf{v}_a) + \textbf{v}_a'$ for some 
synchronization matrix $M_a$ and vectors $\textbf{v}_a, \textbf{v}_a'$ associated with action $a$. 
First observe that since $\conf \wqo \altconf$, there is also a transition $\trans{\altconf}{a}{\altconf'= M_a \cdot (\altconf - \textbf{v}_a) + \textbf{v}_a'}$.
Moreover, we have $M_a \cdot \conf \wqo M_a \cdot \altconf$, and 
therefore $M_a \cdot (\conf - \textbf{v}_a) + \textbf{v}_a' \wqo M_a \cdot (\altconf - \textbf{v}_a) + \textbf{v}_a'$, i.e., $\conf' \wqo \altconf'$.

(ii) For $k$-maximal transitions, consider again $\conf \wqo \altconf$ and 
$\trans{\conf}{a}{\conf'}$, where now $a$ is a $k$-maximal action.
Then $\conf' = M_a \cdot (\conf - \textbf{u}_{a,\conf}) + \textbf{u}_{a,\conf}'$ for some vectors $\textbf{u}_{a,\conf},\textbf{u}_{a,\conf}'$ with 
$\sum_{s \in S}\mathbf{u}_{a,\conf}(s)=\sum_{s \in S}\mathbf{u}_{a,\conf}'(s) \leq k$. Again, first observe that since $\conf \wqo \altconf$, a transition  
$\trans{\altconf}{a}{\altconf'}$ is enabled, where $\altconf' = M_a \cdot 
(\altconf - \textbf{u}_{a,\altconf}) + \textbf{u}_{a,\altconf}'$ and $\textbf{u}_{a,\altconf}(\ls) \geq \textbf{u}_{a,\conf}(\ls)$, $\textbf{u}_{a,\altconf}'(\ls) \geq \textbf{u}_{a,\conf}'(\ls)$ 
for all $\ls \in \LS$. Note that, for any $\ls \in \LS$, we can have 
$\textbf{u}_{a,\altconf}(\ls) > \textbf{u}_{a,\conf}(\ls)$ only if $\conf(\ls)-\textbf{u}_{a,\conf}(\ls) \leq 0$ and $\altconf(\ls) > 
\conf(\ls)$. Furthermore, $\textbf{u}_{a,\altconf}(\ls) - \textbf{u}_{a,\conf}(\ls) \leq \altconf(s) - \conf(s)$. 
%\mnnn{You cannot get any more senders in the bigger system than the number of processes added (since you have already used all the possible senders in the smaller system)}
Therefore, we get $\conf - \textbf{u}_{a,\conf} \wqo \altconf - \textbf{u}_{a,\altconf}$, which implies  
$M_a \cdot (\conf - \textbf{u}_{a,\conf}) \wqo M_a \cdot (\altconf - \textbf{u}_{a,\altconf})$, and thus
$M_a \cdot (\conf - \textbf{u}_{a,\conf}) + \textbf{u}_{a,\conf}' \wqo M_a \cdot (\altconf - \textbf{u}_{a,\altconf}) + \textbf{u}_{a,\altconf}'$, i.e., $\conf' \wqo \altconf'$.

Next, we prove that we can effectively compute the basis of $Pred(C)$, where $Pred(C)$ is the set of states from which a transition exists to a state in an upwards-closed set $C$, as follows:

(i) For a $k$-sender transition based on action $a$, any predecessor $\conf$ in $Pred(C)$ must satisfy (i) $\textbf{v}_a \wqo \conf$, and (ii) $M_a \cdot (\conf - \mathbf{v}_a) + \textbf{v}_a' = \conf'$, for some $\conf' \in C$. The basis of $Pred(C)$ consists of the minimal elements (w.r.t. $\wqo$) that satisfy these conditions, and thus is computable.%\mn{this still seems okay. I think we should either (1) use this minimality as a way to ensure finite basis from the infinite set $Pred(C)$, or state that the basis of the pred(C) is elements that can reach the basis of C etc.}

%\nour{old:(i) For a $k$-sender transition based on action $a$, then, any predecessor $\conf$ that reaches $Pred(C)$ through $a$ must satisfy (i) $\textbf{v}_a \wqo \conf$, and (ii) $M_a \cdot (\conf - \mathbf{v}_a) + \textbf{v}_a' = \conf'$, for some $\conf' \in C$. The basis of $Pred(C)$ consists of the minimal elements (w.r.t. $\wqo$) that satisfy these conditions, and thus is computable.}

(ii) For $k$-maximal transitions, the proof works in the same way,
 except that now we may have multiple possibilities of what a minimal 
predecessor could be, based on different subsets of the senders being present 
or not. Since this is always a finite case distinction, effective 
computability of $Pred$ is still guaranteed.
\qed
\end{proof}

\section{Parameterized Verification for GSPs with Guards}
\label{sec:PV-guards}
For GSPs with guards, compatibility under $\wqo$ in general 
does not hold, since for $\conf \wqo \altconf$, a transition 
on action $a$ that is enabled in $\conf$ may not be enabled in $\altconf$.
Furthermore, note that even strong restrictions on processes are unlikely to yield 
compatibility with respect to $\wqo$, since whenever $\supp(\conf)\subseteq \guard$ for a non-trivial $\guard$, one can always find a $\altconf$ 
with $\conf \wqo \altconf$ and $\supp(\altconf) \nsubseteq \guard$, disabling the action. 
%\mn{old: Furthermore, note that even strong restrictions on processes are unlikely to yield compatibility with respect to $\wqo$, since for any given $\conf$ that 
%satisfies a non-trivial guard $\guard \neq \LS$ one can always find a $\altconf$ 
%with $\conf \wqo \altconf$ such that $\altconf$ does not satisfy $\guard$, i.e., such that $\altconf(\ls) > 0 $ for some local state $\ls \notin \guard$.} 

Therefore, we introduce a refined WQO, denoted $\altwqo$, that is based on the semantics of 
guards, as well as sufficient conditions on the guarded process $P$, 
such that the system $\M_\infty$ is a WSTS and we can effectively compute $
Pred$.

Let $\mathcal{G}$ be the set of guards that appear on transitions in $P$, and recall that $\supp(\conf) = \{ \ls \in \LS \mid \conf(\ls) > 0 \}$. 
Then we consider the following WQO\footnote{We show that 
$\altwqo$ is a WQO by proving that every infinite sequence of global states $\conf_1, 
\conf_2, \ldots$ contains $\conf_i, \conf_j$ with $i
<j$ and $\conf_i \altwqo \conf_j$. To this end, consider an arbitrary infinite 
sequence $\overline{\conf} = \conf_1, \conf_2, \ldots$. Then there is at 
least one set $S$ of local states such that infinitely many 
$\conf_i$ have $\supp(\conf_i)=S$. Let $\overline{\conf'}$ be the 
infinite 
subsequence of $\overline{\conf}$ where all elements have $\supp(\conf'_i)=S$. 
Since $\wqo$ is a WQO, there exist $\conf'_i, \conf'_j$ with $i<j$ and 
$\conf'_i \wqo \conf'_j$, and since $\supp(\conf'_i)=\supp(\conf'_j)=S$, we 
also get 
$\conf'_i \altwqo \conf'_j$. Since $\conf'_i = \conf_k$ and $\conf'_j = \conf_l$
 for some $k < l$, we get $\conf_k \altwqo \conf_l$ for $k < l$, and thus 
$\altwqo$ is a WQO.}:
%\mn{\swen{need to show that this is a WQO?}}
%\footnote{To see that 
%this is really a WQO, we need to show that every infinite sequence $\conf_1, 
%\conf_2, \ldots$ of global states contains contains $\conf_i, \conf_j$ with $i
%<j$ and $\conf_i \altwqo \conf_j$. To see this, consider an arbitrary infinite 
%sequence $\overline{\conf} = \conf_1, \conf_2, \ldots$. Then there is at 
%least one set $S$ of local states such that there are infinitely many 
%$\conf_i$ with $\supp(\conf_i)=S$. Then let $\overline{\conf'}$ be the 
%infinite 
%subsequence of $\overline{\conf}$ where all elements have $\supp(\conf'_i)=S$. 
%Since $\wqo$ is a WQO, there exist $\conf'_i, \conf'_j$ with $i<j$ and 
%$\conf'_i \wqo \conf'_j$, and since $\supp(\conf'_i)=\supp(\conf'_j)=S$, we 
%also get 
%$\conf'_i \altwqo \conf'_j$. Since $\conf'_i = \conf_k$ and $\conf'_j = \conf_l$
% for some $k < l$, we get $\conf_k \altwqo \conf_l$ for $k < l$, and thus 
%$\altwqo$ is a WQO.}
$$\conf \altwqo \altconf \; \text{ iff } \; \left( \conf \wqo \altconf \land \forall \guard \in \mathcal{G}: \left( \supp(\conf) \subseteq \guard \iff \supp(\altconf) \subseteq \guard \right) \right).$$

%\mn{this is still true with the ``iff'' maybe let's keep it as is?}
Intuitively, a global state $\altconf$ is considered 
greater than a global state $\conf$ if $\altconf$ has at least as many 
processes as $\conf$ in any given state, \emph{and} for every transition 
$\trans{\conf}{a}{\conf'}$ that is enabled in 
$\conf$, a transition on action $a$ is also enabled in $\altconf$. 

We will see that compatibility with respect to $\altwqo$ can only be ensured under 
additional conditions, as formalized in the following.
%We note that this requirement essentially states that the parameterized system moves in rounds (if one thinks of each broadcast as a round), and that the boundaries of such rounds (i.e. where those broadcasts are enabled or disabled) is independent of the specific instantiation of the parameterized system.}

\subsection{Guard-Compatibility and Well-Behaved Processes}
\label{sec:well-behavedness}
%We start by defining the notion of strong guard-compatibility, which implies strong compatibility of transitions with respect to $\altwqo$. Then, we relax our conditions to weak guard-compatibility, which only implies compatibility w.r.t. $\altwqo$.\mnnn{Might rephrase since we downplayed the weak guard compatibility.}

\paragraph{Strong Guard-Compatibility for $k$-Sender Actions.}
For a $k$-sender action $a$ with local sending transitions $\trans{\ls_i}{a_i!!, \guard}{\ls_i'}$ for $i \in \{1,\ldots,k\}$, let $\senderset$ be the set of all states $\ls_i$, $\senderset'$ the set of states $s_i'$, and $M_a$ the synchronization matrix. 
We say that action $a$ is 
\emph{strongly guard-compatible} if the following holds for all $\guard' \in \guardset{
:}$

%\left(\senderset \subseteq \guard' \land \senderset' \subseteq \guard' 
%\Rightarrow \forall \ls \in \guard{:}~ \left(\ls \in \guard' \Rightarrow M_a(
%\ls) \in \guard'\right) \right)\\
%\land \left(\senderset \not \subseteq \guard' \land 
\[ 
\senderset' \subseteq \guard' \Rightarrow \forall \ls \in \guard{:}~ M_a(\ls) \in \guard' 
\tag{C1} \label{eq:guard-comp-k-sender}
\]
%\right)

Intuitively, if all senders move into a guard $\guard'$, then also all 
receivers need to move into $\guard'$. This ensures that if $\guard'$ is satisfied after the transition in a system of a given size, then it is satisfied after that transition in a system of any bigger size, because any additional receivers must also move into $\guard'$. Note that Condition~\eqref{eq:guard-comp-k-sender} always holds for trivial guards.

%Furthermore, note that negotiation actions are a special case with $k=0$. Because there are no senders, the premise of ~\eqref{eq:guard-comp-k-sender} always holds, so negotiations are strongly guard-compatible only when all associated receive transitions end within (or outside of) the same guards.

%\mn{\swen{need the special case for rendezvous?}}
%For the special case of rendezvous transitions (where $k=2$ and $M_a=Id$), 
%$\trans{\ls_i}{a_1!!, \guard}{\ls_i'}$ 
%and $\trans{\ls_j}{a_2!!, \guard}{\ls_j'}$, we get that action $a$ is \emph{guard-compatible} if 
%for all $\guard' \in \mathcal{G}$:

%\[
%\left( \left( 
%\ls_i' \in \guard' \land \ls_j' \in \guard'\right) \Rightarrow 
%\left( \guard \subseteq \guard' \right) \right)
%\tag{C1a} \label{eq:guard-comp-rendezvous}
%\]

\paragraph{Strong Guard-Compatibility for $k$-Maximal Actions.}
%For a $k$-maximal action $a$, the idea of the condition is the same as 
%before. However, since the action can fire with different subsets of available senders, we need to ensure that all senders must agree, for every $\guard \in 
%\guardset$, on whether they enter $\guard$ or not.
For a $k$-maximal action $a$ with local transitions $\trans{\ls_i}{a_i!!,G
}{\ls_i'}$ for $i \in \{1,\ldots,k\}$, as before, let $\senderset$ be the set of all states $\ls_i$, $\senderset'$ the set of states $s_i'$, and $M_a$ the synchronization matrix. 
We say that action $a$ is 
\emph{strongly guard-compatible} if the following holds for all $\guard' \in \guardset{
	:}$
\[ 
\left(\senderset' \cap \guard' \neq \varnothing \right) \Rightarrow \left( \senderset' \subseteq \guard' \wedge \forall \ls \in \guard{:}~ M_a(\ls) \in \guard' \right)
\tag{C2} \label{eq:guard-comp-k-max}
\]
%\editt{or}
%\[ 
%\left(\bigvee_{1 \leq i \leq k} \ls_i' \in \guard' \right)
% \Rightarrow \left( \senderset' \subseteq \guard' \wedge \forall \ls \in \guard{:}~ M_a(\ls) \in \guard' \right)
%\tag{C2} \label{eq:guard-comp-k-max}
%\]

%\[
%\left(\bigvee_{1 \leq i \leq k} \ls_i' \in \guard' \right) \Rightarrow \left( \forall \ls \in R: M_a(\ls) \in \guard' \right)
%\tag{C2.1} \label{eq:guard-comp-k-max-1}
%\]
%\[
%\bigwedge_{i,j \in \{1,\ldots,k\}} \left( (\ls_i \triangleleft \ls_j \land \ls_j' \in \guard') \Rightarrow (\ls_i' \in \guard' \land M_a(\ls_i) \in \guard')
%\right)
%\tag{C2.2} \label{eq:guard-comp-k-max-2}
%\]

Intuitively, if one potential sender moves from a state $\ls_j$ into a guard $\guard'$, then every other sender and receiver must do the same, so that $\guard'$ will be satisfied regardless of the number of receivers, or the subset of senders that actually participated in the action

Note that for $k=1$, the condition~\eqref{eq:guard-comp-k-max} is equivalent to 
condition~\eqref{eq:guard-comp-k-sender}. This is to be expected, since semantically there 
is no difference between a $1$-sender action and a $1$-maximal action.

\begin{example}
We can see that actions \tr{\Smoke}, \tr{\Choose}, and \tr{\Resett} from our motivating example in \figref{MappedIntroExample} are strongly guard-compatible:
\begin{compactenum}[--]

\item \tr{\Smoke} is a 1-sender action with sending transition \trans{\st{\Ask}}{\trr{\Smoke}!!, \{\st{\Env, \Ask}\}}{\st{\Pick}}. The state \st{\Pick} is only included in one non-trivial guard $G_2$ = \{\st{\Pick}, \st{\Idle}\}. Since receiving transitions from \{\st{\Env, \Ask}\} end in $\{\st{\Pick, \Idle}\} \subseteq G_2$, condition~\eqref{eq:guard-comp-k-sender} holds, so \tr{\Smoke} is strongly guard-compatible.

\item Consider \tr{\Choose} with sending transitions $\trans{\st{\Pick}}{\trr{\Choose_i!!},\{\st{\Pick,\Idle}\}}{\st{\Report}}$ for $i \in \{1,2\}$ as a 2-sender action. \st{\Report} is only included in one non-trivial guard $G_3$ = \{\st{\Report}, \st{\Idle}\}. Since the receiving transition from \{\st{\Pick}\} ends in $\st{\Idle} \in G_3$ as well,~\eqref{eq:guard-comp-k-sender} holds, so \tr{\Choose} is strongly guard-compatible.

%\item Consider \tr{\Choose} with sending transitions $\trans{\st{\Pick}}{\trr{\Choose_i!!},\{\st{\Pick,\Idle}\}}{\st{\Report}}$ for $i \in \{1,2\}$ as a 2-sender action. \st{\Report} is only included in one non-trivial guard $G_3$ = \{\st{\Report}, \st{\Idle}\}. Since the receiving transition from \{\st{\Pick}\} ends in $\st{\Idle} \in G_3$ as well,~\eqref{eq:guard-comp-k-sender} holds, so \tr{\Choose} is strongly guard-compatible.
\item Consider \tr{\Choose} as a 2-maximal action. Again, \st{\Report} is only included in one non-trivial guard $G_3$ = \{\st{\Report}, \st{\Idle}\}. Since all senders and receivers start from \st{\Pick} and end up in a state in $G_3$, condition \eqref{eq:guard-comp-k-max} holds and \tr{\Choose} is, again, strongly guard-compatible.

%\item Consider \tr{\Choose} as a 2-maximal action. Conditions \eqref{eq:guard-comp-k-max-1} and \eqref{eq:guard-comp-k-max-2} are implied by~\eqref{eq:guard-comp-k-sender}, so because~\eqref{eq:guard-comp-k-sender} holds, \tr{\Choose} is, again, strongly guard-compatible.

% in the 2-sender case, because both sending actions move from $\st{\Pick}$ to $\st{\Report}$. \st{\Report} is only included in one non-trivial guard $G_3$ = \{\st{\Report}, \st{\Idle}\}. Because the receiving transition from \{\st{\Pick}\} ends in $\st{\Idle} \in G_3$ as well, condition~\eqref{eq:guard-comp-k-sender} is satisfied. 

%, and for non-trivial guards this action trivially satisifes conditions \eqref{eq:guard-comp-k-max-1} and \eqref{eq:guard-comp-k-max-2} by negating the premises of all implications (and for the trivial guard $\guard=\LS$ all the consequences are trivially satisfied).

\item \tr{\Resett} is a negotiation action. Recall that negotiations are modeled as a set of 1-sender actions, allowing for an arbitrary sender. Therefore, each of these broadcasts must satisfy~\eqref{eq:guard-comp-k-sender} for the negotiation to be guard-compatible.
\tr{\Resett} is indeed strongly guard-compatible because all of its sending and receiving transitions end in \st{Env}, meaning that when the action fires, all processes will move into a single state, ensuring that all guards will be uniformly enabled or disabled, regardless of the number of processes, which of them is the sender, or whether they begin in \st{Report} or \st{Idle}.

\item Finally, as stated in \secref{unguardedGSPs}, the internal transition
$\trans{\st{\Env}}{G_1}{\st{\Ask}}$ can be modeled by a 1-sender action, say $a$, with a send transition $\trans{\st{\Env}}{a!!,G_1}{\st{\Ask}}$ and self-loop receive transitions on all states. The sender ends up in one non-trivial guard $G_1=\{\st{\Env},\st{\Ask}\}$. Since receiving transitions from $\{\st{\Env, \Ask}\}$ end in $\{\st{\Env, \Ask}\} \subseteq G_1$, condition~\eqref{eq:guard-comp-k-sender} holds, so $a$ is strongly guard-compatible.
\end{compactenum}
\end{example}

\paragraph{Refinement: Weak Guard-Compatibility.} 
To support a larger class of transitions, we show how to relax the previous 
conditions, at the cost of making them more complex. 
The idea is that, instead of requiring that any guard that is 
satisfied by the senders is also satisfied by the receivers immediately after 
the transition, it is enough if 
the receivers \emph{have a path} to a state that satisfies all these guards. To 
avoid unnecessary complexity, we only consider paths of internal transitions.

We write $t \triangleleft \ls$ if, for all guards $\guard \in \guardset$, $s 
\in \guard \Rightarrow t \in \guard$. Similarly, we write $t \triangleleft H$ for a set of states $H$ if, for all guards $\guard \in \guardset$, $H \subseteq \guard \Rightarrow t \in \guard$. If there exists a path of unguarded internal transitions from $\ls$ to $\ls'$, we write 
$\ls \rightsquigarrow \ls'$.
Then, condition~\eqref{eq:guard-comp-k-sender} can be relaxed to
\[ 
\senderset' \subseteq \guard' \Rightarrow \forall \ls \in \guard{:}~ \left( M_a(\ls) \in \guard' \lor \exists \ls' \in \LS: (\ls' \triangleleft \senderset' \land M_a(\ls) \rightsquigarrow \ls') \right) 
\tag{C1w} \label{eq:guard-comp-k-sender-weak}
\]

%\edit{old}
%In a similar way, conditions~\eqref{eq:guard-comp-k-max-1} can be relaxed to obtain condition (C2.1w)  by replacing $M_a(\ls) \in \guard'$ with $M_a(\ls) \in \guard' \lor \exists \ls' \in \LS: (\forall \ls_i': (\ls' \triangleleft \ls_i') \land M_a(\ls) \rightsquigarrow \ls')$, and (C2.2w) is obtained from \eqref{eq:guard-comp-k-max-2} by replacing $M_a(\ls_i) \in \guard'$ with $M_a(\ls_i) \in \guard' \lor \exists \ls' \in \LS: (\ls' \triangleleft \ls_i' \land M_a(\ls_i) \rightsquigarrow \ls')$.
%
%
%\edit{new}
In a similar way, condition~\eqref{eq:guard-comp-k-max} can be relaxed to obtain condition (C2w)  by replacing $M_a(\ls) \in \guard'$ with $M_a(\ls) \in \guard' \lor \exists \ls' \in \LS: (\forall \ls_i': (\ls' \triangleleft \ls_i') \land M_a(\ls) \rightsquigarrow \ls')$.

Actions that satisfy condition \eqref{eq:guard-comp-k-sender-weak} or (C2w) are called \emph{weakly guard-compatible}.
%\mn{\swen{not sure if this is needed:}}
%Note that it is in general not enough to require that $t'$ satisfies $\guard'$: since we are quantifying over all guards in our condition, the weaker condition might be satisfied if there are different paths to different states $t'$ that individually satisfy one guard that $\ls'$ satisfies, but still no path to a global state $\altconf'$ with $\conf' \altwqo \altconf'$.

\paragraph{Refinement: Internal Transitions.}
For internal transitions, we can relax the condition a bit further. Intuitively, since all receiving processes stay in their current state, it is enough to require condition (C1w) to hold if a guard is satisfied after the transition that was not satisfied before. Furthermore, the internal transitions that the ``receivers'' can use to reach a state that satisfies the same guards as the sender can be guarded, as long as the guard is general enough to guarantee that these transitions can be taken.\footnote{Guarding these internal transitions is also possible with more complicated conditions in other cases, but for simplicity we only introduce it formally for internal transitions.}

%\footnote{In fact, guarding these internal transitions is also possible under more complicated conditions in other cases, but for simplicity we only introduce it formally for internal transitions.}

Formally, we write $\ls \rightsquigarrow_G \ls'$ if there exists a path of internal transitions from $\ls$ to $\ls'$ such that each transition is guarded by some $\guard' \supseteq \guard$. Then, we say that an internal action $a$ is \emph{weakly guard-compatible} if
the following holds for every $\guard' \in \guardset$ (note that $\senderset$ 
and $\senderset'$ are both singleton sets, denoted $\ls$ and $\ls'$ in the 
following) and every $t \in \guard$:
\[
\ls \notin \guard' \land \ls' \in \guard' \Rightarrow \exists t':~ (t' \triangleleft \ls' \land t\rightsquigarrow_{\guard''} t'),
\tag{C3w} \label{eq:guard-comp-internal}
\]
where $\guard'' = \guard \cup \{ t \mid t \triangleleft \ls'\}$.

\paragraph{Well-Behavedness.}
Based on guard-compatibility, we can now 
define the class of processes that will allow us to retain decidability of 
reachability properties in the parameterized system:
We say that a process $P$ is \wellbehaved if every action is (weakly) guard-compatible.

Note that unguarded processes are trivially well-behaved.

\begin{example}
Observing that all actions in the process depicted in~\figref{MappedIntroExample} are (strongly) guard-compatible, it is clear that the process is \wellbehaved.
\end{example}

\paragraph{Well-Behaved Systems in the Literature.}
We want to point out that many systems studied in the literature are naturally well-behaved.

For example, Emerson and Kahlon~\cite{EK03a} introduce a model for cache 
coherence protocols that is based 
on broadcast communication and guards. They show that many textbook protocols 
can be modeled under the following restrictions:
(i) every state is assumed to have an unguarded internal transition to the 
initial state $\st{Init}$, and (ii) the only conjunctive guard is 
$\{ \st{Init} \}$. Clearly, every action in a process that satisfies these 
conditions will also satisfy condition~\eqref{eq:guard-comp-k-sender-weak}, 
and therefore well-behaved systems subsume and significantly generalize 
the types of protocols considered by Emerson and Kahlon. 

Moreover, there has recently been much research on the verification of round-based distributed systems~\cite{Gleissenthall.Pretend.Synchrony.POPL.2019,damian.communicationclosed.CAV.2019,choosepaper.arxiv}, where processes can move 
independently to some extent, with the restriction that transitions between rounds
can only be done synchronously for all processes. When abstracting from certain features (e.g. fault-tolerance and process IDs), our model is well-suited to express such 
systems: guards can be used to restrict transitions to happen only in a 
certain round, and can furthermore model the ``border'' of a round that needs 
to be reached by all processes, such that they can jointly move to the next 
round.

Our example from \figref{MappedIntroExample} can also be seen 
as a round-based system: the first round includes states $\st{\Env, \Ask}$, and 
upon taking the transition on $\trr{\Smoke}$, all processes move to the second 
round, which includes states $\st{\Pick, \Idle}$. From there, on action 
$\trr{\Choose}$ the system moves to the third round, which includes states 
$\st{\Report, \Idle}$, and on action $\trr{\Resett}$ back to the first round. 
Note that the states in different rounds are exactly the guards that are used 
in the transitions---or seen the other way around, guards induce a set of 
rounds on the local state space, and the guard-compatibility conditions 
ensure that processes move between these rounds in a systematic way.

While the rounds are very simple in this example, the technique is much more 
general and can be used to express many round-based systems, including 
those described in~\secref{evaluation}.

%Naturally, the design of these systems ensures that when an action is sent, then all other processes are in phases

%If a specific action can be sent in round x, then all other processes can receive it during that round.

%There is an expected set of input/output actions that processes utiliz ewhich are available in a particular set.

%This may seems restrictive, but a large class of dist. systems are round-based....

%In the lifetime of dsitributed systems, processes are usually in subsets
%They're kind of moving in different hases, and even though technically you can be anywhere, the execution flow is somewhat...

\subsection{Decidability for Well-Behaved Guarded Processes}

Based on the notion of well-behavedness, we can now obtain a decidability 
result that works in the presence of guards. The following theorem implies 
that parameterized verification for properties $\phi_m(\ls)$ is decidable 
for well-behaved processes.

\begin{theorem}
\label{thm:GPS-with-guards}
If $\M_\infty$ is based on a well-behaved GSP process, then
$\M_\infty$ is a WSTS and we can effectively compute Pred.
\end{theorem}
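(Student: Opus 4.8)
The plan is to follow the template of \thmref{generalizedBC}, upgrading each step from $\wqo$ to $\altwqo$. Since the footnote accompanying the definition of $\altwqo$ already shows it is a WQO, two obligations remain: (a) compatibility of $\rightarrow$ with $\altwqo$, and (b) effective computability of $Pred$ on $\altwqo$-upward-closed sets. For (a), fix $\conf \altwqo \altconf$ and a transition $\trans{\conf}{a,\guard}{\conf'}$, and argue in three steps. \emph{Enabledness transfers}: firing in $\conf$ gives $\supp(\conf) \subseteq \guard$, so by definition of $\altwqo$ also $\supp(\altconf) \subseteq \guard$; together with $\conf \wqo \altconf$ (which supplies enough senders) this yields a transition $\trans{\altconf}{a,\guard}{\altconf'}$. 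This is exactly the place where $\altwqo$ buys us what plain $\wqo$ could not. \emph{Domination of post-states}: $\conf' \wqo \altconf'$ follows verbatim from the unguarded argument, since the numeric update rule $\conf' = M_a \cdot (\conf - \mathbf{v}_a) + \mathbf{v}'_a$ (resp. its $k$-maximal variant) is unchanged. \emph{Equality of guard profiles}: the genuinely new obligation is $\supp(\conf') \subseteq \guard' \iff \supp(\altconf') \subseteq \guard'$ for every $\guard' \in \guardset$; one direction is immediate from $\supp(\conf') \subseteq \supp(\altconf')$, and the forward direction is precisely what the guard-compatibility conditions deliver.

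Concretely, for a strongly guard-compatible $k$-sender action: if $\supp(\conf') \subseteq \guard'$ then $\senderset' \subseteq \guard'$ (the senders occupy $\senderset'$), so \eqref{eq:guard-comp-k-sender} gives $M_a(\ls) \in \guard'$ for all $\ls \in \guard$, hence in particular for all $\ls \in \supp(\altconf) \subseteq \guard$. Thus every process of $\altconf'$---senders landing in $\senderset'$ and receivers landing in $M_a(\ls)$---sits in $\guard'$, so $\supp(\altconf') \subseteq \guard'$ and we even get \emph{strong} compatibility. The $k$-maximal case is analogous but inherits the bookkeeping of \thmref{generalizedBC}(ii): the actual sender multiset in $\altconf$ may strictly dominate that in $\conf$, and \eqref{eq:guard-comp-k-max-1}--\eqref{eq:guard-comp-k-max-2} discharge exactly these surplus senders, with the $\triangleleft$-clauses covering the cases where occupying the extra source state $\ls_i$ renders $\conf$ and $\altconf$ incomparable under $\altwqo$ (so compatibility is vacuous); as in the unguarded proof this is a finite case split over sender subsets.

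I expect the main obstacle to be the \emph{weakly} guard-compatible case, where the one-step successor $\altconf'$ need not satisfy $\guard'$. Here I would invoke the $\rightsquigarrow$-paths promised by (C1w) (resp. (C2.1w)/(C2.2w)) to route the offending receivers through internal transitions, producing $\altconf \rightarrow^* \altconf''$ and yielding only non-strong compatibility. The delicate point is that this routing must preserve $\conf' \wqo \altconf''$. It does: any receiver target $M_a(\ls) \notin \guard'$ is necessarily empty in $\conf'$ (since $\supp(\conf') \subseteq \guard'$), so moving the corresponding $\altconf$-processes further along an internal path into a state $\ls'$ with $\ls' \triangleleft \senderset'$---which together with $\senderset' \subseteq \guard'$ forces $\ls' \in \guard'$---removes mass only from a coordinate where $\conf'$ is zero while adding mass elsewhere, both $\wqo$-safe. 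I would also note that the routing cannot create a guard satisfied by $\altconf''$ but not by $\conf'$, again because $\supp(\conf') \subseteq \supp(\altconf'')$, so the reverse direction of the guard-profile equivalence survives automatically.

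Finally, for (b) I would mirror \thmref{generalizedBC}: a predecessor of a basis element $\conf'$ under action $a$ must satisfy the linear update equation and the sender-availability constraint $\senderset \wqo \conf$, now conjoined with the guard constraint $\supp(\conf) \subseteq \guard$. Minimizing with respect to $\altwqo$ reduces to a finite case distinction over the finitely many guard profiles (subsets of $\guardset$) and, for maximal actions, over the finitely many sender subsets, so a finite basis of $Pred(C)$ is computable. Combined with the compatibility established above and the fact that $\{\conf \mid \conf(\ls) \geq m\}$ is $\altwqo$-upward-closed, this gives the claimed WSTS with effectively computable $Pred$.
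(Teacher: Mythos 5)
Your overall strategy is exactly the paper's: enabledness transfers because $\altwqo$ forces equal guard profiles on supports, numeric domination $\conf' \wqo \altconf'$ is inherited from the proof of \thmref{generalizedBC}, and the genuinely new obligation is the guard-profile equivalence, whose nontrivial direction is discharged by the guard-compatibility conditions; $Pred$ is computed as in the unguarded case with the added constraint that a predecessor must satisfy the transition's guard. Your handling of $k$-sender and $k$-maximal actions, in both the strong and weak variants, is correct and coincides with the paper's cases (i) and (ii) --- indeed you supply details the paper leaves implicit, such as why rerouting receivers along $\rightsquigarrow$-paths preserves $\conf' \wqo \altconf''$ (the offending target coordinates are empty in $\conf'$) and why the reverse direction of the guard-profile equivalence is automatic from $\supp(\conf') \subseteq \supp(\altconf'')$.

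There is, however, a missing case: internal actions that are weakly guard-compatible only via condition \eqref{eq:guard-comp-internal}. The paper's notion of well-behavedness (``every action is (weakly) guard-compatible'') explicitly admits such actions, and they are not subsumed by your analysis of \eqref{eq:guard-comp-k-sender} and \eqref{eq:guard-comp-k-sender-weak}: condition \eqref{eq:guard-comp-internal} triggers only when a guard is newly satisfied ($\ls \notin \guard'$ but $\ls' \in \guard'$), and, crucially, the escape routes it promises are paths $t \rightsquigarrow_{\guard''} t'$ of \emph{guarded} internal transitions, whereas your routing argument leans on $\rightsquigarrow$ consisting of unguarded transitions, so that the extra processes of $\altconf$ can always traverse them. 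With guarded paths one must additionally argue that every guard along the path stays satisfied while those processes move; this is precisely why the paper fixes $\guard'' = \guard \cup \{ t \mid t \triangleleft \ls'\}$ and devotes a separate case (iii) to internal actions: the subcases $\ls' \notin \guard'$ or $\ls \in \guard'$ are dispatched by taking the same local transition from $\altconf$, and otherwise every state $t$ with $\conf'(t) = 0 < \altconf(t)$ is emptied along its \eqref{eq:guard-comp-internal}-path into some $t' \triangleleft \ls'$, yielding $\altconf \rightarrow^* \altconf'$ with $\conf' \altwqo \altconf'$. As written, your proof establishes the theorem only for the subclass of well-behaved processes whose internal actions satisfy \eqref{eq:guard-comp-k-sender} or \eqref{eq:guard-comp-k-sender-weak}; adding the paper's case (iii), together with an executability argument for the $\rightsquigarrow_{\guard''}$-paths, closes the gap.
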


\begin{proof}
To prove that $\M_\infty$ is a WSTS, we show compatibility of 
transitions w.r.t. $\altwqo$, i.e., 
if $\conf \altwqo \altconf$ and $\conf \rightarrow \conf'$, 
then $\exists \altconf'$ with $\conf' \altwqo \altconf'$ and $\altconf 
\rightarrow^* \altconf'$.
We consider three cases: (i) $k$-sender, (ii) $k$-maximal, and (iii) internal transitions.

(i) Suppose $a$ is a $k$-sender action. Let 
$\trans{\conf}{a, \guard}{\conf'}$ be a transition and $\conf \altwqo 
\altconf$.
Since $\conf \altwqo \altconf$ implies that $\supp(\altconf) \subseteq \guard$, we know that transition $\trans{\altconf}{a, \guard}{\altconf'}$ 
is possible, and by the proof of \thmref{generalizedBC} we know that 
$\conf' \wqo \altconf'$. To prove compatibility with respect to $\altwqo$, it 
remains to show that $\forall \guard' \in \guardset: (\supp(\conf')\subseteq 
\guard' \Rightarrow \supp(\altconf')\subseteq \guard')$.

First assume that condition~\eqref{eq:guard-comp-k-sender} holds. 
Then, let $\guard' \in \guardset$ be an arbitrary guard. By \eqref{eq:guard-comp-k-sender}, 
we either have $\senderset \not\subseteq \guard'$,
in which case the 
desired condition is satisfied for $\guard'$, or we have that $\forall \ls \in
 \guard{:}~ M_a(\ls) \in \guard'$, i.e., all potential receivers move into $
\guard'$. Thus, we get $\supp(\conf')\subseteq \guard'$ iff $\supp(\altconf')
\subseteq \guard'$, satisfying the desired condition.

If instead of \eqref{eq:guard-comp-k-sender} the action satisfies \eqref{eq:guard-comp-k-sender-weak}, the argument is the same, except that if necessary we use the internal transitions that are guaranteed to exist by the condition to arrive in a state $\altconf'$ with $\conf' \wqo \altconf'$.
%\mn{Awkward phrasing}

(ii)
Suppose $a$ is a $k$-maximal action with local transitions 
$\trans{\ls_i}{a_i!!,G}{\ls_i'}$ for $i \in \{1,\ldots,k\}$ and synchronization 
matrix $M_a$. By the proof of \thmref{generalizedBC} we know that there exists a transition $\trans{\altconf}{a, \guard}{\altconf'}$ with
$\conf' \wqo \altconf'$, and it remains to show that $\forall \guard' \in \guardset: (\supp(\conf')\subseteq 
\guard' \iff \supp(\altconf')\subseteq \guard')$. 

% new stuff
Let $\guard' \in \guardset$ be an arbitrary guard, and assume the action is strongly guard-compatible.
By condition \eqref{eq:guard-comp-k-max} we know that 
we either have $\senderset \cap \guard' = \varnothing$,
in which case the desired condition is satisfied for $\guard'$, or $\left( \senderset' \subseteq \guard' \wedge \forall \ls \in \guard{:}~ M_a(\ls) \in \guard' \right)$, i.e., all potential senders and receivers move into $
\guard'$. Thus, we get $\supp(\conf')\subseteq \guard'$ iff $\supp(\altconf')
\subseteq \guard'$, satisfying the desired condition.

Again, if the action is weakly guard-compatible, the argument can be extended by using the paths of internal transitions, if necessary.

(iii) 
Suppose $a$ is an internal action with local transition $\trans{\ls}{a,G}{\ls'}$, let 
$\trans{\conf}{a, \guard}{\conf'}$ and $\conf \altwqo \altconf$.

%\swen{To support internal transitions and possibly some other cases, we need to have a form of non-strong compatibility that allows a transition to be ``repeated'' until a local state (or a set of states, for multiple senders) becomes empty. This may be necessary to finally arrive at a state that satisfies the same guards.}
If $\ls' \notin \guard'$ or $\ls \in \guard'$, we can simply take the same local transition from $\altconf$ and trivially, $\supp(\altconf') \subseteq \guard'$. If $\ls' \in \guard'$ and $\ls \notin \guard'$, then by condition~\eqref{eq:guard-comp-internal} every $t \in \guard$ has an executable path to a state $t'$ with $t' \triangleleft \ls'$, and therefore, by definition of $\triangleleft$, $t' \in \guard'$. For each of these states, if $\conf'(t)=0$ and $\altconf(t)>0$, then from $\altconf$ we take this path for all processes in $t$, and arrive in a state $\altconf'$ with $\conf' \altwqo \altconf'$.
%
%For $\guard \in \guardset$ and $\ls, t \in \LS$, write $\ls \leq_\guard t$ if $\ls \notin \guard$ implies $t \notin \guard$. Extended to the set of all guards, write $\ls \leq_\guardset t$ if there exists $\guard$ such that $\ls \leq_\guard t$.

Effective computability of $Pred$ follows 
from the proof of \thmref{generalizedBC}---the only difference is that we 
must consider the guards, i.e., a predecessor is only valid if it 
additionally satisfies the guard of the transition under consideration.
\qed
\end{proof}

\section{Cutoffs for {\gbc}s}
\label{sec:cutoffs}

We investigate cutoff results for {\gbc}s and their 
connection to the decidability results in \thmref{generalizedBC} and 
\ref{thm:GPS-with-guards}. 
While the proofs of these theorems
yield a decision procedure for parameterized verification, a cutoff result is more versatile as it reduces parameterized 
verification to a problem over a fixed number of processes, and under certain conditions can also be used for parameterized synthesis~\cite{Jacobs.ParameterizedSynthesis.X.2012}.

\subsection{Definition and Basic Observations}
%Formally, a \emph{cutoff} for a class of processes $\mP$ and a class of specifications $\Phi$ is a number $c \in \Nats$ such that for every $P \in \mP$ and $\varphi \in \Phi$,
%\[ \forall n \geq c: \left( P^c \models \varphi \iff P^n \models \varphi \right). \] 
%\nour{again, is it up to the cutoff or at the cutoff?}
A \emph{cutoff} for a class of processes $\Pi$ and a class of properties $\Phi$ is a number $c \in \Nat$ such that for every $P \in \Pi$ and $\phi \in \Phi$,
$$ \M_\infty \models \phi \Leftrightarrow \M(c) \models \phi$$

We show how to obtain cutoffs for well-behaved {\gbc}s that satisfy additional conditions, and for reachability properties of the form $\phi_m(\ls)$, based on observations from the proof of \thmref{generalizedBC}.
While for any given parametrized system and any safety property a cutoff exists~\cite{Namjoshi07}, a general cutoff, even if it can be computed, may be too large to be of practical value:
 it has been shown that for broadcast protocols the time complexity of checking reachability is non-primitive recursive in the size of the processes~\cite{SchmitzS13}, and from the proof one can conclude that the same must hold for the size of cutoffs.

%Note that this idea applies more broadly than only to the cutoff results that we will give in the following, and even more broadly than our \wellbehaved systems. It may be useful for any WSTS with broadcast communication and a WQO that is based on $\preceq$.

\paragraph{Example: Quadratic Cutoffs.}
%We give an example that shows that, even for simple reachability properties, there are \gbc protocols that have a cutoff that is quadratic in the size of the process.
%
%\begin{figure}[t]
%\centering
%\includegraphics[width=\textwidth]{quadratic_cutoff.png}
%%\caption{Example witnessing quadratic cutoff}
%%\label{fig:cutoff-example}
%\end{figure}
%
\begin{figure}[t]
\centering
\begin{tikzpicture}[>=stealth',auto,node distance=8cm,semithick,scale=0.47,every node/.style={transform shape}]

\tikzstyle{every edge}=[font=\fontsize{16}{0}\selectfont, draw = black, align=center,->]
\tikzstyle{every state}=[minimum size=1.3cm, font=\fontsize{20}{0}]

%%% node (states) definition%%%%%%

\coordinate     (c1)		at (-4,0);
\node[state]	(s0)		at (-2,0)	    		{$s_0$};
\node[state]	(s1)		at (-2,4)	    		{$s_\bot$};
\node[state]	(s2)		at (5,0)	    		{$s_6$};
\node[state]	(s3)		at (9,0)	    		{$s_8$};
\node[state]	(s4)		at (12,0)	    		{$s_3$};

\node[state]	(s5)		at (7,2)	    		{$s_E$};
\node[state]	(s6)		at (7,-2)	    		{$s_7$};

\node[state]	(s7)		at (8,4)	    		{$s_4$};
\node[state]	(s7')		at (2,4)	    		{$s_5$};
\node[state]	(s8)		at (2,-4)	    		{$s_1$};
\node[state]	(s9)		at (8,-4)	    		{$s_2$};

%%% paths definition%%%%%%
\draw

(c1)	edge	(s0)
(s7')	edge	node	[above] 	{$b!!$}			(s1)
(s0)	edge	node	[pos=0.3] 	{$i!!$}			(s2)
(s0)	edge	node	[above, rotate = -45] 	{$i??$}		(s8)
(s2)	edge	node	[below left] 	{$a??$}			(s6)
(s6)	edge	node	[below right] 	{$a??$}			(s3)
(s3)	edge	node	[above] 	{$a??$}			(s2)
(s8)	edge	node	[] 	{$a??$}			(s9)
(s9)	edge	node	[below right] 	{$a??$}			(s4)
(s3)	edge	node	[above right] 	{$b??$}			(s5)
(s4)	edge	node	[above right] 	{$a??$}			(s7)
(s7)	edge	node	[above] 	{$a??$}			(s7')
(s7')	edge	node	[left, pos=0.3] 	{$a??$}			(s8)

%(c1.60)     edge   node [above, rotate = 7.1] {$a_1$}  (w1.120)
%(c1)    edge  node [above, rotate = 7.1] {$a_2$}  (w1.200)
%(c1.-40)     edge  node [above, rotate = -7.1] {$a\#$}  (l1.200)
%(c1)     node [above =1.7cm,font=\fontsize{20}{0}\selectfont, xshift = 40] {$G=\{c_1,c_2,x\}$}
;
\end{tikzpicture}
\caption{Example witnessing quadratic cutoff. Not depicted are additional sending transitions on $a!!$ from every state in the outer cycle to $s_\bot$.}
\label{fig:cutoff-example}
\end{figure}
Consider the (unguarded) process in \figref{cutoff-example}. 
%\mn{\swen{make this more prominent, into a Lemma/Thm?}}
We are 
interested in a lower bound on the cutoff for this process, with respect to 
$\phi_1(s_E)$, i.e., reachability of $\ls_E$ by at least one process. 
Note that to reach $\ls_E$, we need at least one process in $\ls_8$ and one 
in $\ls_5$ at the same time.
From the initial state $s_0$, the only possible action is $i$, sending one 
process to $\ls_6$ in the inner cycle and all other processes to $\ls_1$ in 
the outer cycle.
Then, the only way to make progress is action 
$a$, moving the process in the inner cycle to $\ls_7$, the sending 
process from $\ls_1$ to $\ls_\bot$ (sending transitions on $a!!$ are not 
depicted in \figref{cutoff-example}), and all other processes to 
$\ls_2$. 
After three further transitions on $a$, the outer processes are in $\ls_5$, 
where the sending transition on $b!!$ could be fired, but the process in 
the inner cycle is in $\ls_7$, so additional transitions on $a$ are required. 
Only after two additional rounds around the outer cycle we arrive in a 
state where both $\ls_5$ and $\ls_8$ are occupied, and we can take the final 
transition on $b$ that takes one process into $\ls_E$. To arrive 
there, we took $16$ transitions (one on $i$, $14$ on $a$, and one on $b$), and 
by construction every process can only take one sending transition in a 
run. Thus, we need a system with at least $16$ processes 
to have one of them reach $\ls_E$, and no smaller number can be a 
cutoff for $\phi_1(s_E)$. 

To see that cutoffs grow at least quadratically, note that in similar examples where 
the inner and outer
cycles consist of $p_1$ and $p_2$ states, respectively, and $p_1$ and $p_2$ 
are relatively prime, then we need $p_1 \cdot p_2 + 1$ processes to reach $\ls_E$. 

%Finally, note that this process only uses $1$-sender transitions (i.e., 
%broadcasts), and no guards. That is, 
%the quadratic lower bound on cutoffs not only holds for {\gbc}s, but 
%also for the broadcast protocols of Esparza et al.~\cite{EsparzaFM99}.

\subsection{Conditions for Small Cutoffs}
%\mnnn{Minor rewording, please double-check.} 
%\mn{It looks like this section still freely uses negotiations, internal transitions, broadcasts etc. 
  %These needs to made consistent with previous sections which only use $k$-sender and $k$-maximal transitions.}
We introduce sufficient conditions on processes that allow us to obtain small cutoffs. These conditions are inspired by our intended applications (see \secref{evaluation}), and based on insights from the decision procedure in the proof of \thmref{generalizedBC} and the example above. We observe that any $\conf \in Pred(C)$ that reaches a state $\conf' \in C$ through a $k$-sender action $a$ must satisfy (i) $\textbf{v}_a \wqo \conf$, and (ii) $M_a \cdot (\conf - \mathbf{v}_a) + \textbf{v}_a' = \conf'$. Thus, if there is $\conf' \in C$ such that $\neg (\textbf{v}_a \wqo \conf)$, we need to consider a predecessor $\conf$ with $|\conf| > |\conf'|$. 
It is easy to see that this can only happen if $\conf'$ contains processes in states that can be reached through $a$ only through either a receiving transition, or a sending transition if $k>1$. Thus, we want to avoid that states we are interested in are only reachable through such transitions.

We restrict our attention to specifications $\phi_m(\ls)$
%for some $m\in \Nats$, and 
and to cases where we can identify conditions on a GSP process $P$ such that the cutoff for such specifications is $c=m$.
If this is the case, then we say that reachability of $\ls$ is \emph{\nice} in $P$, and that the pair  $\tuple{P,\phi_m(\ls)}$ is \amenable.

We begin with a simple case, where systems are restricted to only internal transitions and negotiations (we defined in Section~\ref{sec:unguardedGSPs} how these are expressed in terms of $1$-sender transitions).

\begin{lemma}
\label{lem:cutoffnegotiation}
Let $P=\tuple{A,\LS,\ls_0,T}$ be a \wellbehaved GSP process such that all transitions are internal transitions or negotiations. Then reachability of $s$ is \nice in $P$ for every $\ls \in \LS$. 
\end{lemma}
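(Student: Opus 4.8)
We must show that $m$ is a cutoff for $\phi_m(\ls)$, \ie, that $\M_\infty \models \phi_m(\ls) \iff \M(m) \models \phi_m(\ls)$ for every $m$ and every $\ls \in \LS$; this is exactly the assertion that reachability of $\ls$ is \nice. The direction $(\Leftarrow)$ is immediate, since $\M_\infty$ subsumes the behaviours of $\M(m)$. The plan for the nontrivial direction $(\Rightarrow)$ is a process-removal (projection) argument: starting from a witnessing run in some $\M(n)$ with $n \geq m$, I would keep exactly $m$ of the processes that occupy $\ls$ in the final state and discard the other $n-m$, obtaining a valid run of $\M(m)$ that still places those $m$ processes in $\ls$.

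To make ``keeping'' and ``discarding'' individual processes precise, I would first pass from the counter-abstracted system to the concrete, identity-tracking system, which is justified because the counter abstraction is without loss of precision, so that every process has a well-defined trajectory along the run. Fix a witnessing run of $\M(n)$ ending in a state with at least $m$ processes in $\ls$, let $K$ be a set of $m$ such processes, and write $\conf_K$ for the restriction of any configuration $\conf$ to the processes in $K$. I would then project each transition onto $K$ using the two structural facts that characterize the admissible actions: (i) an internal transition is a $1$-sender transition with $M_a = Id$, so it moves only its sender and leaves every other (receiving) process in place; and (ii) a negotiation applies its synchronization matrix $M_a$ uniformly to all processes and permits an arbitrary process as sender. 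Concretely, an internal transition whose sender lies outside $K$ is deleted (it is a no-op on $\conf_K$, so the two adjacent restricted configurations coincide and can be merged); an internal transition whose sender lies in $K$ is retained; and every negotiation is retained, choosing any process of $K$ as its sender.

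The crux, and the main obstacle, is to show that each retained transition remains enabled after projection, in particular that its guard is still satisfied. Here I would invoke support-monotonicity: restricting to $K$ can only shrink the support, so if $\supp(\conf) \subseteq \guard$ held along the original run, then $\supp(\conf_K) \subseteq \supp(\conf) \subseteq \guard$, and the conjunctive guard is still met. For a retained negotiation I would additionally verify sender availability and that the effect is the expected one: since every participating state of a negotiation carries a sending transition paired with its receiving transition, any nonempty configuration admits a sender, and the $m \geq 1$ processes of $K$ always supply one; moreover the paired structure gives $M_a \cdot \textbf{v}_a = \textbf{v}_a'$, so firing the negotiation on $\conf_K$ yields exactly $M_a \cdot \conf_K$ independently of the chosen sender. (The guard-compatibility hypotheses of well-behavedness are not what drives this argument; only the shape of internal transitions and negotiations together with support-monotonicity are used.)

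Putting this together, the projected sequence is a legal run of $\M(m)$ from its initial configuration (the $m$ processes of $K$ all start in $\ls_0$, as they did in $\M(n)$) to a configuration in which all $m$ processes of $K$ occupy $\ls$; hence $\M(m) \models \phi_m(\ls)$, establishing $(\Rightarrow)$ and the cutoff $c = m$. I expect the only delicate bookkeeping to be the guard-preservation step and the verification that a retained negotiation remains firable after its original sender is removed, both of which reduce to the single observation that process removal only shrinks the support.
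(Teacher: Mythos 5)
Your proof is correct, and its nontrivial half coincides with the paper's: the paper also proves the direction from $\M_\infty$ to $\M(m)$ by keeping the $m$ processes that reach $\ls$ and deleting the rest, with guard preservation following from the fact that removal can only shrink the support (your projection formalism just makes this precise, including the observation that a negotiation's effect is $M_a \cdot \conf$ independently of the chosen sender). The genuine difference is in the converse direction. You dispatch it in one line from the paper's definition of $\M_\infty$ as an existential over all system sizes, which is sound given the definitions as literally stated. The paper instead proves something stronger: if all $m$ processes of $\M(m)$ reach $\ls$, then every \emph{larger} instance $\M(n)$ also satisfies $\phi_m(\ls)$, by adding processes that ``follow'' the internal transitions of the existing ones so that the same guards remain satisfied along the run---and this is exactly where well-behavedness is invoked. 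That monotonicity is what the standard cutoff notion (equivalence between $\M(n)$ and $\M(c)$ for \emph{all} $n \geq c$) requires, and it explains why well-behavedness appears as a hypothesis at all; as you yourself note, your argument never uses it, so under the literal definition the hypothesis is superfluous. In short, your route proves the lemma as stated more economically, while the paper's route additionally establishes the size-monotone statement that the well-behavedness hypothesis is evidently there to support.
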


\begin{proof}
To see this, first consider a system with $n>m$ processes, where eventually 
$m$ of them reach $\ls$. We can simulate this run in a system with $m$ 
processes by simply keeping the $m$ processes that reach $\ls$, and removing 
all others. Similarly, if all processes in a system of size $m$ eventually 
reach $\ls$, then we can simulate this run in a bigger system by adding 
processes that ``follow'' the internal transitions of the other processes 
such that always the same guards as in the original run will be satisfied. 
Well-behavedness ensures that this is always possible. 
%\mn{\swen{more detail?}}
\qed
\end{proof}

While we are in general not interested in systems that \emph{only} communicate through internal transitions and negotiations, we can refine this observation based on the states we are interested in, and allow other types of communication.

To this end, define a transition of a process $P$ to be \emph{\free} if it is (i) an internal transition, (ii) a sending transition of either a broadcast (i.e., a 1-sender action) or a $k$-maximal action, or (iii) a receiving transition $\trans{\ls}{a??,\guard}{\ls'}$ of a broadcast with matching sending transition $\trans{\ls}{a!!,\guard}{\ls'}$. Note that the latter includes negotiation transitions. A path from one state to another is \emph{\free} if all transitions on the path are \free. The idea is that free transitions and paths are only restricted by guards (i.e., the \emph{absence} of processes in certain states), but not by the \emph{existence} of other processes in certain states (as, e.g., a $2$-sender transition would be, since a sender depends on the presence of another sender to be able to fire the global transition and move along its own local transition).

\begin{lemma}
\label{lem:cutoffnoreset}
Let $P=\tuple{A,\LS,\ls_0,T}$ be a \wellbehaved GSP process, and $\ls \in \LS$ such that \textbf{all} paths from $\ls_0$ to $\ls$ in $P$ are \free. Then reachability of $s$ is \nice in $P$.
\end{lemma}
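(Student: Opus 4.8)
The plan is to fix $m$ and show that $c=m$ is a cutoff for $\phi_m(\ls)$, i.e.\ $\M_\infty \models \phi_m(\ls) \iff \M(m) \models \phi_m(\ls)$. One direction is immediate: $\M(m)$ is one of the instances that $\M_\infty$ may initialize to, so $\M(m) \models \phi_m(\ls)$ trivially gives $\M_\infty \models \phi_m(\ls)$. The substance is the converse. Recalling that $\M_\infty \models \phi_m(\ls)$ means $\exists n.~\M(n)\models\phi_m(\ls)$, and that $\phi_m(\ls)$ can only hold in an instance with at least $m$ processes (so $n \geq m$), it suffices to show: from any run of $\M(n)$ with $n \geq m$ that ends in a state $\conf$ with $\conf(\ls)\geq m$, one can construct a run of $\M(m)$ ending with all $m$ processes in $\ls$.

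The construction is a \emph{projection}. First I would fix $m$ process identities $P_1,\ldots,P_m$ that occupy $\ls$ at the end of the big run, and replay the big run's sequence of global actions while tracking only these $m$ processes. I would maintain the invariant that after each processed step every $P_j$ is in exactly the state it occupies at the corresponding point of the big run; in particular the projected configuration $\hat\conf$ always satisfies $\hat\conf \wqo \conf$, hence $\supp(\hat\conf)\subseteq\supp(\conf)$, so every guard that was satisfied in the big run remains satisfied after projection. The key structural input is the hypothesis that all paths from $\ls_0$ to $\ls$ are \free: since each $P_j$ travels from $\ls_0$ to $\ls$, every local transition on which $P_j$ \emph{changes} its local state lies on such a path and is therefore \free.

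For each step of the big run on an action $a$ with guard $\guard$, I would case on the roles of $P_1,\ldots,P_m$. If none of them changes state, the step is simply skipped in the small run: each such $P_j$ either idles or takes a self-loop receive, so omitting $a$ keeps every $P_j$ in place. If some $P_j$ does change state, then because that move is \free it is (i) an internal transition, (ii) a sending transition of a broadcast or a $k$-maximal action, or (iii) a broadcast receive with a matching send $\trans{t}{a!!,\guard}{t'}$. In cases (i) and (ii) I would fire $a$ in the small run using the present $P_j$'s as the sender(s)---legitimate because an internal action and a broadcast need only one sender, and a $k$-maximal action can fire with as few as one sender present---and let the remaining $P_j$'s receive exactly as in the big run; since receives are deterministic they land in the same states. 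In case (iii), where $P_j$ moves by receiving and all actual senders of $a$ happened to be discarded processes, I would \emph{promote} one moving receiver to the sender role via its matching send, which lands it in the same target $t'$, and let the rest receive as before. In every subcase the fired action is enabled (its guard holds, as noted) and each $P_j$ reaches its intended next state, preserving the invariant.

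The step I expect to be the main obstacle is justifying that a chosen process never changes state through a \emph{non-\free} receive, so that moving receivers always carry a send that can be promoted. This is exactly where freeness is essential: a state-changing receive on a $k$-sender action with $k\geq 2$, or on a $k$-maximal action, is \emph{not} \free, so if some $P_j$ took such a receive en route to $\ls$ it would place a non-\free transition on a path from $\ls_0$ to $\ls$, contradicting the hypothesis. Hence every state-changing receive of a $P_j$ is of type (iii) and has a matching send; dually, any receive of a $P_j$ that is not \free must be a self-loop, which is precisely why skipped steps cause no inconsistency. Assembling these cases yields a valid run of $\M(m)$ ending with all $m$ processes in $\ls$, establishing $\M(m)\models\phi_m(\ls)$ and thus that reachability of $\ls$ is \nice in $P$.
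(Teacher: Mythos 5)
Your overall strategy is the same as the paper's: project the big run onto the $m$ processes that end in $\ls$, note that the converse direction is immediate from the existential reading of $\M_\infty \models \phi_m(\ls)$, and use $\hat\conf \wqo \conf$ to argue that every guard satisfied in the big run stays satisfied after projection. (The paper additionally proves a ``padding'' direction using well-behavedness, which under its own existential definition of cutoff you correctly identify as unnecessary.) However, there is a genuine gap in your case (ii), for $k$-maximal actions: you assert that you can fire $a$ ``using the present $P_j$'s as the sender(s)'' and ``let the remaining $P_j$'s receive exactly as in the big run.'' This contradicts the $k$-maximal semantics: such an action is fired with the \emph{maximal} number of available senders ($\min(m',k)$, where $m'$ counts processes currently in sender states), so a tracked process occupying a sender state cannot elect to receive unless the sender slots are filled by other processes. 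In the big run it could legitimately self-loop as a receiver because \emph{discarded} processes filled those slots; after projection those processes are gone, and the semantics forces it to send. Its sending transition, although \free, may lead to a state that lies on no path to $\ls$ at all --- the hypothesis says nothing about such states, since in the big run they were entered only by discarded processes.

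Concretely, take states $\ls_0, S, T_1, T_2, \ls$, unguarded internal transitions $\trans{\ls_0}{}{S}$ and $\trans{T_1}{}{\ls}$, and a $2$-maximal action $a$ with sends $\trans{S}{a_1!!}{T_1}$ and $\trans{S}{a_2!!}{T_2}$, all receives being self-loops, and $T_2$ a dead end. All paths from $\ls_0$ to $\ls$ are \free, so the lemma applies with, say, $m=2$. In $\M(3)$, move all three processes to $S$ and fire $a$: the semantics forces both sends, so one process goes to $T_1$ (then to $\ls$), one is sacrificed to $T_2$, and one self-loops in $S$; firing $a$ again sends the survivor to $T_1$ and then $\ls$. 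The tracked pair thus contains a process that was a self-looping ``receiver'' in $S$ at the first firing. Your replay in $\M(2)$ reaches $\hat\conf(S)=2$ and must fire $a$ there, whereupon both sending transitions are forced and one tracked process lands in the dead end $T_2$, breaking the invariant irrecoverably. The lemma's conclusion still holds for this process, but only via a \emph{rescheduled} run (fire $a$ while the second process is still waiting in $\ls_0$), not via step-by-step replay --- so the missing idea is precisely an argument that the small system can always reorder the schedule so that tracked processes never improperly co-occupy sender states, which is delicate once guards can force such co-occupation. To be fair, the paper's own two-sentence proof (``the same transitions for existing processes are also possible\ldots'') glosses over exactly this point; but since your write-up commits to the specific step that fails, it does not go through as stated.
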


\begin{proof}
The argument follows the same line as the one above for protocols with only internal transitions and negotiations, since the same transitions for existing processes are also possible if we can ensure that the same guards can be satisfied in the bigger system. Well-behavedness ensures that there is a run in the bigger system where the same guards are satisfied.
%\mn{\swen{more detail?}}
\qed
\end{proof}

We require that all paths be free, since \emph{existence} of a free path is not sufficient in general: if $m>1$, then the first process that moves along that free path may force other processes to leave it (e.g., by taking a sending transition of a broadcast). However, this condition is still slightly restrictive, and can be relaxed.

%\input{cutoffExample}

%\input{construct}

%\nour{The first observation on this construct, is that it has \emph{infinite} paths from $\ls_0$ (Initial) to $\ls$ (Next1) (because of the cycles). Hence, we need some form of induction to relate such paths in the general case. Otherwise, we can have a specific lemma targeting this exact construct.\\
%The goal here is to come up with a non-recursive definition that can be checked in one pass. Hence, we should avoid marking broadcast receives as ``free'' so we do not introduce any new free paths which would trigger the Lemma again.}

Define a \emph{simple} path as a path with no repeated states. We show that under additional conditions, it is enough to consider restrictions that are based on paths that are simple \emph{and} free:

\begin{lemma}
\label{lem:newcutoffLemma22}
Let $P=\tuple{A,\LS,\ls_0,T}$ be a \wellbehaved GSP process, $\ls \in \LS$, and let $\mathcal{F}$ be the set of simple free paths from $\ls_0$ to $\ls$. If for each send transition: 
\begin{compactenum}
\item the transition does not appear in paths in $\mathcal{F}$ and the corresponding receiving transitions $\trans{\ls_s}{a??,G_a}{\ls_d}$ with $\ls_s \in p$ for some $p \in \mathcal{F}$ have $\ls_d=\ls_s$, or,
\item the transition appears in paths in $\mathcal{F}$ and the following holds for every corresponding receive transition $\trans{\ls_s}{a??,G_a}{\ls_d}$ where $\ls_s \in p$ for some $p \in \mathcal{F}$ and $\ls_d \notin p$ for any $p \in \mathcal{F}$: either (a) there exists an internal transition $\trans{\ls_s}{}{\ls_d'}$ with $\ls_d' \in p$ for some $p \in \mathcal{F}$, or (b) all paths out of $\ls_d$ lead back to a state $\ls_f$ in a path in $\mathcal{F}$ and are free between $\ls_d$ and $\ls_f$. 
%\footnote{Note that, while we say free here, the first item implicitly disallows such paths to have broadcast-send transitions.},
\end{compactenum}
then reachability of $\ls$ is \nice in $P$.
\end{lemma}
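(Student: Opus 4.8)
The plan is to prove the cutoff $c = m$ by establishing both directions of $\M_\infty \models \phi_m(\ls) \Leftrightarrow \M(m) \models \phi_m(\ls)$. The right-to-left direction is immediate, since $\M(m)$ is one of the instances quantified over in $\M_\infty$. For the forward direction it suffices to produce a run of $\M(m)$ ending in a global state $\conf$ with $\conf(\ls) \geq m$, and I would do this constructively, routing all $m$ processes from $\ls_0$ to $\ls$ along the simple \free paths in $\mathcal{F}$ (assumed nonempty, as is implicit for the statement to be of interest). This reuses the core idea of \lemref{cutoffnoreset}: a \free transition never depends on the \emph{presence} of another process as a sender, so a single process can advance along a path in $\mathcal{F}$ in isolation, and well-behavedness guarantees---exactly as in \lemref{cutoffnegotiation} and \lemref{cutoffnoreset}---that the guards met along the way stay satisfied as processes advance in cohorts.

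The only mechanism by which advancing one process can disturb another is a \emph{send} of a broadcast or $k$-maximal action: firing such an action forces every other process to take the matching receive $a{??}$, possibly moving it off the paths in $\mathcal{F}$. Conditions~1 and~2 of the statement are precisely what tame this interference, and the heart of the proof is a case analysis over the send we fire. If the send does not lie on any path in $\mathcal{F}$ (condition~1), then every receive $\trans{\ls_s}{a{??},G_a}{\ls_d}$ out of a state $\ls_s$ on a path in $\mathcal{F}$ is a self-loop, so firing the enclosing action (e.g.\ a $k$-maximal action advanced with a single sender) leaves all on-path processes fixed. If the send does lie on a path in $\mathcal{F}$ (condition~2), a forced receive may pull an on-path process from $\ls_s$ to a state $\ls_d$ lying on no path in $\mathcal{F}$; here case~(a) lets me \emph{preempt} the derailment by first moving that process along the internal transition $\trans{\ls_s}{}{\ls_d'}$ onto a path in $\mathcal{F}$, whereas case~(b) guarantees that a process actually derailed to $\ls_d$ has only \free paths back to some state $\ls_f$ on a path in $\mathcal{F}$, from which it resumes its journey to $\ls$.

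Stitching these local repairs into one valid run of $\M(m)$ is where the real work lies. I would maintain the invariant that at every step each process sits either on a path in $\mathcal{F}$ or on a detour that, by case~(a)/(b), is provably recoverable, and I would control the order of firings by a progress measure---such as the total remaining distance of the $m$ processes to $\ls$ along paths in $\mathcal{F}$, charged so that recoveries contribute only boundedly---to argue termination. The step I expect to be the main obstacle is \emph{simultaneity}: $\phi_m(\ls)$ demands that all $m$ processes occupy $\ls$ in a single global state, so I must ensure that bringing the last processes into $\ls$ does not eject those already there. I would address this by driving the final advance through a transition harmless to occupants of $\ls$---an internal transition into $\ls$, or a negotiation-style (matching-send) receive that carries an entire cohort from the predecessor state into $\ls$ at once---and, when the last edge is instead a plain broadcast send, by invoking condition~1/2 to show that the forced receive $a{??}$ out of $\ls$ is either a self-loop or recoverable before the final synchronization, so that all $m$ processes can be assembled in $\ls$ together.
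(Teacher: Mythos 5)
There is a genuine gap, and it appears at the very first step: you have misidentified the proof obligation. The nontrivial direction of the cutoff equivalence is an \emph{implication}: \emph{if} $\M_\infty \models \phi_m(\ls)$, i.e., if some instance $\M(n)$ with $n > m$ has a run reaching $m$ processes in $\ls$, \emph{then} $\M(m) \models \phi_m(\ls)$. You instead propose to establish $\M(m) \models \phi_m(\ls)$ unconditionally, by routing all $m$ processes along the paths in $\mathcal{F}$, never using the hypothesis. This proves a strictly stronger statement than the lemma---and that stronger statement is false under the lemma's hypotheses. Concretely: take $\LS = \{\ls_0, \ls\}$ with a single internal transition $\trans{\ls_0}{a,\{\ls_0\}}{\ls}$ (receives are self-loops). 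This process is \wellbehaved, $\mathcal{F}$ consists of the one-edge free path, and conditions~1/2 hold vacuously because every relevant receive is a self-loop. Yet at most one process can ever reach $\ls$: once the first process moves, the support contains $\ls$ and the guard $\{\ls_0\}$ is never satisfied again. For $m=2$ the lemma holds \emph{vacuously} (both sides of the equivalence are false), while the run your construction promises to build cannot exist. The root cause is that \free transitions are free of dependence on \emph{other senders}, but not free of \emph{guards}; nothing in the hypotheses guarantees that the guards along paths in $\mathcal{F}$ can be satisfied by $m$ processes simultaneously, or indeed at all.

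The paper's proof avoids this by making essential use of the hypothesis: it starts from the \emph{given} run of $\M(n)$ in which $m$ processes reach $\ls$, and simulates it in $\M(m)$ by keeping exactly those $m$ processes and deleting all others. Conditions~1/2 are then invoked to repair the projected run: sends by deleted processes either never displace the kept processes (condition~1, self-loop receives on states of $\mathcal{F}$), or displace them only onto detours from which condition~2(a)/(b) guarantees a free return to $\mathcal{F}$. Note also that your ``simultaneity'' worry dissolves in this approach---the kept processes are precisely those occupying $\ls$ at the end of the original run, so their joint arrival is inherited rather than re-proved. (The converse direction, which the paper proves by adding processes that mimic existing ones, is indeed immediate from the existential definition of $\M_\infty \models \phi_m(\ls)$, as you say; the paper's argument additionally yields monotonicity in $n$.) Your second and third paragraphs do engage the right interference issues, but in service of a construction that cannot succeed; to close the gap you would have to recast them as a run-projection argument, at which point you would have reproduced the paper's proof.
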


\begin{proof}
%This proof is almost identical to the proof of \lemref{newcutoffLemma}; the only difference is that the ``diverging'' receivers (from free paths) are allowed a larger degree of freedom for ``merging back'' to free paths.
First consider a run of a system that satisfies the above conditions, and has $n>m$ processes, where eventually $m$ of them reach $\ls$. We can simulate this run in a system with $m$ processes by keeping the $m$ processes that reach $\ls$, and removing all others. Note that the sending transitions are on the same free simple path from which processes can diverge using the corresponding receiving or sending transitions, or they do not affect them at all. Hence, at least one of the senders is guaranteed to reach $\ls$. All other senders and receivers may diverge from a simple free path but are guaranteed a free path back to a state along a free path and hence, can reach $\ls$ freely. 

Now assume that all processes in a system of size $m$ eventually reach $\ls$, then we can simulate this run in a bigger system by adding processes (that behave in the same way as an existing process). Note that, since any transition diverging from a free simple path can \emph{only} be triggered by a sending transition on that same free path, it is impossible to add a sender that can make processes diverge and then not reach $\ls$ after.
\qed
\end{proof}

\begin{example}
In this example we show how \lemref{newcutoffLemma22} applies to the example in \figref{MappedIntroExample}. Here $\ls_0$ is the \st{\Env} state, $\ls$ is the \st{\Report} state, and the value of $m$ is 3 (since the safety specification is: no more than 2 detectors can report the fire).

The set of simple free paths $\mathcal{F}$ is:
\begin{compactenum}[--]
\item $\st{\Env} \xrightarrow{} \st{\Ask} \xrightarrow{\trr{\Smoke}!!} \st{\Pick} \xrightarrow{\trr{\Choose_i}!!} \st{Report}$ for $i \in \{1,2\}$, and 
\item $\st{\Env} \xrightarrow{} \st{\Ask} \xrightarrow{\trr{\Smoke}??} \st{\Pick} \xrightarrow{\trr{\Choose_i}!!} \st{Report}$ for $i \in \{1,2\}$.
\end{compactenum}
%It is clear that all the sending transitions {$\trr{\Smoke}!!,\trr{\Choose_1}!!,\trr{\Choose_2}!!$} appear only in $\mathcal{F}$. Furthermore, along those paths, we have the following broadcast-receive transitions that need to satisfy the second condition of the lemma:
%\begin{compactenum}[--]
%\item the transition $\trans{\st{\Env}}{\trr{\Smoke}??}{\st{\Idle}}$ satisfies the condition because the internal transition $\trans{\st{\Env}}{}{\st{\Ask}}$ exists in a path in $\mathcal{F}$.
%\item the transition $\trans{\st{\Pick}}{\trr{\Choose}??}{\st{\Idle}}$ satisfies the condition since all paths out of \st{\Idle} are free (namely, the negotiation transition $\trans{\st{\Idle}}{\trr{\Resett}}{\st{\Env}}$) and lead back to a path in $\mathcal{F}$.
%\end{compactenum}

It is clear that all the sending transitions {$\trr{\Smoke}!!,\trr{\Choose_1}!!,\trr{\Choose_2}!!$} appear only in $\mathcal{F}$. Furthermore, the corresponding broadcast-receive transitions satisfy the required conditions as follows:
\begin{compactenum}[--]
\item the transition $\trans{\st{\Env}}{\trr{\Smoke}??}{\st{\Idle}}$ satisfies condition (2a) because the internal transition $\trans{\st{\Env}}{}{\st{\Ask}}$ exists in a path in $\mathcal{F}$.
\item the transition $\trans{\st{\Pick}}{\trr{\Choose}??}{\st{\Idle}}$ satisfies condition (2b) since all paths out of \st{\Idle} are free (namely, the negotiation transition $\trans{\st{\Idle}}{\trr{\Resett}}{\st{\Env}}$) and lead back to a path in $\mathcal{F}$.
\end{compactenum}

Since \lemref{newcutoffLemma22} holds, the reachability of $\ls$ is \nice and the cutoff is 3. 

\end{example}
%
%Note that, while our cutoff theorems only cover a restricted class of 
%systems, they are sufficient for many applications, as we show in 
%Section~\ref{sec:evaluation}.

\paragraph{Checking the Cutoff Conditions.}
Note that while the conditions in \lemref{newcutoffLemma22} seem complex, all our 
cutoff conditions can be checked on the process definition in polynomial 
time, making them well-suited for fully automatic verification.
\section{Applications and Evaluation}
\label{sec:evaluation}

To evaluate our approach, we consider several distributed 
applications that use agreement protocols like consensus or leader election, and that can be modeled as well-behaved systems that satisfy one of our cutoff lemmas:
%\mn{\nour{old: To evaluate our approach, we consider several distributed 
%applications that use agreement protocols like consensus or leader election, and show that each can be modeled as a well-behaved system that satisfies one of our cutoff lemmas:}}

\begin{compactenum}[--]
\item Chubby~\cite{burrows2006chubby}: A distributed lock service for coarse-grained synchronization with an elected leader node that handles client messages.

%\para{Chubby Application.}
%Chubby~\cite{burrows2006chubby} is a distributed lock service. Applications can %interface
%with Chubby as a file system where they send reads and writes to the Chubby
%system, and the data is replicated safely on different servers. The Chubby
%system starts by picking a \emph{leader} server which is responsible for receiving
%clients' requests. The leader can safely serve read requests directly and is
%responsible for committing writes to all the replicas before sending an
%acknowledgment to the client. The leader periodically times out, and a new round
%of election should happen to pick a new leader. 

\item Distributed Smoke Detector (SD): A sensor network
application that elects a subset of processes, who have detected smoke, to report
to the authorities.

%\para{Smoke Detector (SD).} We consider a distributed smoke detector whose behavior is as follows. Upon detecting smoke, the
%detectors coordinate using an \consagree protocol to choose at most two processes
%to contact the fire department. This example displays the utility of allowing the set of participating processes in the \consagree operation to be dynamically constructed: only detectors that detected the smoke coordinate as opposed to all the detectors in the system.

\item Smoke Detector with Reset (SDR): A variant of SD that
uses a ``reset'' signal to resume monitoring for smoke, thereby requiring
infinite rounds of \consagree.
(this was our motivating example in \figref{MappedIntroExample})

%\para{Smoke Detector with Reset (SDR).} We examine a variant of the smoke detector that uses a ``reset'' environment signal to move all detectors back to the initial state to start new rounds of detection. The presence of the reset signal makes the problem more interesting as it simulates infinite rounds of \consagree.

\item Distributed Mobile Robotics (DMR): Based on an existing
benchmark~\cite{Desai.DRONAFrameworkSafe.X.2017}, where a set of robots
successively coordinate to create a motion plan.

%\para{Distributed Mobile Robotics (DMRs).} As a larger case study, we model the system presented in~\cite{Desai.DRONAFrameworkSafe.X.2017} where a set of robots share a workspace with obstacles, and need to coordinate their movements. The robots coordinate to create a motion plan by successively choosing each robot to create a plan while taking into account the previous robots' plans. We model this system in our \ch model: the robots choose one robot to make a plan, then the remaining robots re-enter consensus to choose a second robot and so on. 
%As in SATS, each round of consensus determines the participant set of the next, but unlike SATS, the number of consensus rounds is not statically bounded.

\item Distributed Key-Value Store (KVS) modeling a key-value
store \'a la Redis~\cite{radisref}.

%\para{Distributed Key-Value Store (KVS)} For a proof-of-concept of the \valstore operation, we model a simple key-value store in the spirit of Redis~\cite{radisref}\nour{let's find something distrusted, as opposed to in-memory?}, where read and update requests come from the clients to be served by processes which reach \consagree on their stored value. We forgo the logic of key indexing in favor of a more readable, single-value system. Processes service read requests from their stored value, while upon receiving an update request, they use \consagree to determine the new consistently stored value. This system shows the power of our abstraction: we allow the user to start \consagree from multiple locations in the process.
 
\item Small Aircraft Transportation System (SATS): The landing
protocol of SATS proposed by NASA~\cite{satsref}. SATS aims to increase access to small airports without control towers by allowing aircrafts to coordinate with each other to operate safely upon entering the airport airspace.

%\para{Small Aircraft Transportation System (SATS).}
%Our final case study addresses the landing operation of the Small Aircraft
%Transportation System (SATS)~\cite{satsref}, a frequent target for verification~\cite{johnson2012parametrized,carreno2005safety,munoz2004modeling}.
%The idea is that the aircraft should coordinate with each other to figure out
%how to land safely and avoid collisions. The planes use consensus to choose successive subsets of planes to progress to the next phase of landing, until just one plane at a time is chosen to land. 
%SATS uses multiple rounds of consensus, with the results of one consensus feeding in to the participant set of the next.

\item SATS$^{++}$: A variant of the SATS
protocol where all processes communicate explicitly to determine subsets of aircrafts to coordinate the landing with.\\

%\para{Variant of SATS (SATS$^{++}$).} \nour{important: we can keep this only if we keep the one where we use pairwise!}We consider a variant of the SATS protocol where all processes communicate to build \code{BaseSet}, rather than only the processes in \st{Ask} 
%\end{compactenum}
\end{compactenum}
\begin{table}[t] 
\begin{tabular}{l  c  c  c  c}
\hline 
\toprule 
\multirow{2}{*}{\textbf{Benchmark}} & \multirow{2}{*}{\textbf{ States }} &\multirow{2}{*}{\textbf{ Cutoff }}  &\textbf{ Verification} \\
 							       & 							      & 							      & \textbf{Time(s)}\\
 \midrule 
Chubby 				& 9 		& 2 		& 0.12 \\ 
SD 					& 5 		& 3  	& 0.28 \\ 
SDR 				& 5 		& 3 		& 0.13 \\ 
DMR 				& 8 		& 3  	& 0.16 \\ 
KVS 				& 18 	& 3  	& 3.06 \\ 
SATS 				& 24 	& 5  	& 3.83 \\ 
SATS$^{++}$ 		& 26 	& 5  	& 17.1 \\ 
\bottomrule 
\end{tabular}
%
%\nour{saving a line?\\}
%\begin{tabular}{l  c  c  c  c}
%\hline 
%\toprule 
%\textbf{Benchmark  } & \textbf{  States  } &\textbf{  Cutoff  }  &\textbf{  Verification Time(s)} \\
% \midrule 
%Chubby 				& 9 		& 2 		& 0.12 \\ 
%SD 					& 5 		& 3  	& 0.28 \\ 
%SDR 				& 5 		& 3 		& 0.13 \\ 
%DMR 				& 8 		& 3  	& 0.16 \\ 
%KVS 				& 18 	& 3  	& 3.06 \\ 
%SATS 				& 24 	& 5  	& 3.83 \\ 
%SATS$^{++}$ 		& 26 	& 5  	& 17.1 \\ 
%\bottomrule 
%\end{tabular}

%\vspace{0.5em}
\caption{Performance of parameterized verification based on our cutoffs.}%\nour{merge local vars and locs into states? not sure.}}
\label{table:results}
\end{table}
%
%\mnnn{Not sure how to go about this. We need a reason to do this in the choose model and not here. And do we want to say more about the choose model?}
\begin{wrapfigure}[10]{r}{0.45\textwidth}
\vspace{-2em}
%\vspace{-2em}
\centering
% trim=left botm right top.
%\fbox{\includegraphics[width=1\linewidth,clip, trim=2cm 0cm 2cm 0cm]{../common/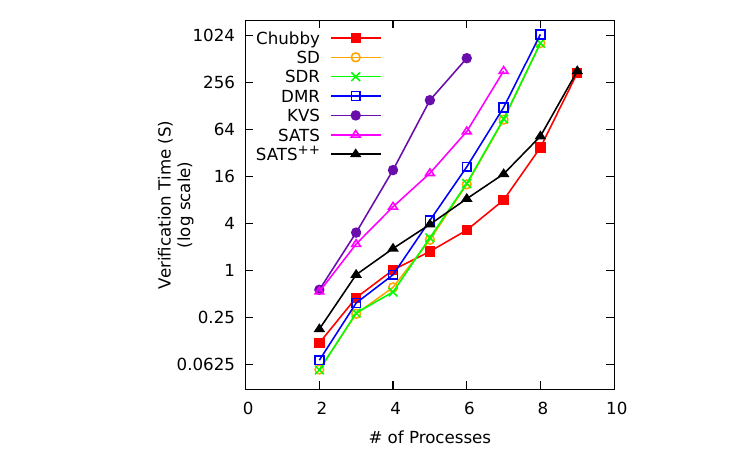}}
\includegraphics[width=1\linewidth,clip, trim=2cm 0cm 2cm 0.3cm]{varyNlog.pdf}
\caption{Verification time as a function of the number of processes.}
\label{fig:varyN}
\vspace{-1em}
\end{wrapfigure}

In addition, we provide an experimental evaluation, based on related
work~\cite{choosepaper.arxiv} in which a new model---the 
  \ch model---that can be seen as a
  refinement of GSP, is proposed. The \ch model extends a standard model of
  distributed
  systems~\cite{Alur.AutomaticCompletionDistributed.X.2015,Alur.AutomaticSynthesisDistributed.X.2017}
  with a primitive that abstracts various types of distributed \consagree
  protocols. The work further defines a mapping from the \ch model 
			 to GSP that
  establishes a simulation equivalence between the two models, enabling
  interchange of safety verification and cutoff results between the two
   models. 
		
	To make use of the ease of encoding the above benchmarks in the \ch model and
  the ease of verification in the \ch model using off-the-shelf model
  checkers, we illustrate the effect of our cutoff results on 
  efficiency of verification in the \ch model.  For the benchmarks given above,
  \figref{varyN} depicts the verification time as a function of the number of
  processes. Observe that verification time grows roughly exponentially with
  the number of processes. Moreover, verification for all the benchmarks timed
  out beyond 9 processes, for a timeout of 30 minutes. In contrast, in
  \tabref{results} all benchmarks have a cutoff of less than 6, and reasonable verification times.

\section{Related Work}

Bodies of work that aim at automatically solving the parameterized verification problem
(which is undecidable in the most general
case~\cite{Suzuki.PMCP_UndecidableFirstPPR.1988.PPR,Emerson.PMCP_UndecidableNeatPaper.POPL.1995})
take a large variety of different
approaches~\cite{GurfinkelSM16,BouajjaniJNT00,PnueliRZ01,ClarkeTV06,GhilardiR10,abdulla2016parameterized,KaiserKW10,KurshanM95,WolperL89},
in most cases without a focus on decidability. In the following we consider the
approaches that target decidability, with models closely related to our GSP model.

\paragraph{Models with Broadcasts and/or Global Guards.} We want to enable reasoning about distributed systems, abstracting complex building blocks like agreement protocols by primitives that satisfy \emph{assume-guarantee} specifications. To support parameterized reasoning for systems with such abstractions, one needs a model with (i) conjunctive guards to model the assumptions, and (ii) forms of synchronization that are sufficiently general to model the guarantees of those building blocks, i.e., generalizations of broadcast communication. 

Esparza et al.\cite{EsparzaFM99} present a decidability result for safety properties of broadcast protocols, but without global guards. Their result is also based on a reduction to WSTSs, but we showed that the WQO presented in their work (corresponding to the WQO $\wqo$ in \secref{decUGSPs}) is not suitable for systems with guards. We note that our \hlm subsumes the model of Esparza et al., and that our cutoff results also apply to their model (which had no previous cutoff results).

Other existing models either are not sufficiently general~\cite{EK00,EK03a,EK03b}, or support a combination of broadcasts and conjunctive guards without restrictions~\cite{emerson2003model}, which makes safety undecidable. This highlights the significance of our result: we manage to find a model with conjunctive guards and global synchronization such that safety remains decidable.

\paragraph{Other Decidable Classes.} One way to obtain decidability is to restrict the generality of the parameterized verification problem in various ways.
% However, by restricting some aspect of the system, decidability results can be obtained. One large focus of research in this area was the identification of interesting classes of systems, and proving the decidability or undecidability of the resulting verification problems. 
 Most results in this direction consider a fully connected network (a clique), either with rendezvous communication~\cite{GS92,AminofKRSV18}, local updates with global guards~\cite{EK00,AusserlechnerJK16}, or variants of these~\cite{DelzannoRB02}. Some communication primitives have also been considered in more complex networks, for example token passing~\cite{EN03,CTTV04,AminofJKR14}, or broadcasts~\cite{DelzannoSTZ12}. Decidability results for systems that are composed of identical components have recently been surveyed by Bloem et al. \cite{BloemETAL15} as well as Espazra et al. \cite{Esparza16}.
%\mn{a comparison note here will be the same as the one under cutoffs.}
%\para{Cutoff results.} 
Several bodies of work attempt to identify cutoff bounds for different classes of distributed systems. For example, cutoffs have been obtained for cache coherence protocols~\cite{EK03a}, guarded protocols~\cite{Jacobs.AnalyzingGuardedProtocols.X.2018,emerson2003model,EK00}, consensus protocols~\cite{Maric.CutoffBoundsConsensus.X.2017}, and self-stabilizing systems~\cite{Bloem.SynthesisSelfstabilisingByzantineresilient.X.2016}. 
None of these approaches are sufficiently general to tackle the types of distributed applications we address. 

%\nour{not sure this fits under decidable classes (where we had it), or that it's super related, maybe we take it out?:} Approaches based on threshold 
%automata~\cite{Konnov.CompletenessBoundedModel.X.2014,Konnov.Para2ParameterizedPath.X.2017,Konnov.ShortCounterexampleProperty.X.2017} build on the intuition that most fault-tolerance strategies depend on some form of counting replies and checking if they are above some threshold.

\paragraph{Petri Nets and Vector Addition Systems.} 
Also closely related to the parameterized verification problems we consider is the body of work on Petri nets and vector addition systems, surveyed e.g. by Esparza and Nielsen~\cite{EsparzaN94} or Reisig~\cite{Reisig13}. While some types of communication can faithfully be expressed in these systems, global synchronization in general cannot.

\section{Conclusion}

We introduced global synchronization protocols (GSP), a system model that 
generalizes many existing models supporting global 
synchronization such as broadcast synchronization, 
pairwise rendezvous, and asynchronous rendezvous. 
We identified sufficient conditions, 
summarized under our notion of well-behavedness, that ensure 
decidability of the parameterized verification problem even in the presence 
of global (conjunctive) transition guards. Finally, we investigated cutoffs 
for parameterized verification, and identified sufficient conditions under which small 
cutoffs exist.

In ongoing work, we are focusing on extensions of our cutoff results
as well as a dedicated implementation of our decision procedure.
In the near future, we plan to investigate sufficient conditions that enable 
support for the parameterized verification of liveness properties for GSPs,
and intend to develop a domain-specific 
language for writing GSPs that are well-behaved by construction.

\bibliographystyle{splncs04}
\bibliography{ms}

\begin{thebibliography}{10}
\providecommand{\url}[1]{\texttt{#1}}
\providecommand{\urlprefix}{URL }
\providecommand{\doi}[1]{https://doi.org/#1}

\bibitem{abdulla2016parameterized}
Abdulla, P., Haziza, F., Holik, L.: {Parameterized Verification Through View
  Abstraction}. International Journal on Software Tools for Technology Transfer
   \textbf{18}(5),  495--516 (2016)

\bibitem{Alur.AutomaticCompletionDistributed.X.2015}
Alur, R., Raghothaman, M., Stergiou, C., Tripakis, S., Udupa, A.: {Automatic
  Completion of Distributed Protocols with Symmetry}. In: International
  Conference on Computer Aided Verification. pp. 395--412. Springer (2015)

\bibitem{Alur.AutomaticSynthesisDistributed.X.2017}
Alur, R., Tripakis, S.: {Automatic Synthesis of Distributed Protocols}. SIGACT
  News  \textbf{48}(1),  55--90 (Mar 2017)

\bibitem{AminofJKR14}
Aminof, B., Jacobs, S., Khalimov, A., Rubin, S.: Parameterized model checking
  of token-passing systems. In: {VMCAI}. Lecture Notes in Computer Science,
  vol.~8318, pp. 262--281. Springer (2014)

\bibitem{AminofKRSV18}
Aminof, B., Kotek, T., Rubin, S., Spegni, F., Veith, H.: Parameterized model
  checking of rendezvous systems. Distributed Computing  \textbf{31}(3),
  187--222 (2018)

\bibitem{AusserlechnerJK16}
Au{\ss}erlechner, S., Jacobs, S., Khalimov, A.: {Tight Cutoffs for Guarded
  Protocols with Fairness}. In: Jobstmann, B., Leino, K.R.M. (eds.)
  Verification, Model Checking, and Abstract Interpretation - 17th
  International Conference, {VMCAI} 2016, St. Petersburg, FL, USA, January
  17-19, 2016. Proceedings. Lecture Notes in Computer Science, vol.~9583, pp.
  476--494. Springer (2016)

\bibitem{berkovits2019verification}
Berkovits, I., Lazic, M., Losa, G., Padon, O., Shoham, S.: {Verification of
  Threshold-Based Distributed Algorithms by Decomposition to Decidable Logics}.
  In: International Conference on Computer Aided Verification (2019)

\bibitem{Bloem.SynthesisSelfstabilisingByzantineresilient.X.2016}
Bloem, R., Braud-Santoni, N., Jacobs, S.: {Synthesis of Self-stabilising and
  Byzantine-resilient Distributed Systems}. In: International Conference on
  Computer Aided Verification. pp. 157--176. Springer (2016)

\bibitem{BloemETAL15}
Bloem, R., Jacobs, S., Khalimov, A., Konnov, I., Rubin, S., Veith, H., Widder,
  J.: {Decidability of Parameterized Verification}. Synthesis Lectures on
  Distributed Computing Theory, Morgan {\&} Claypool Publishers (2015)

\bibitem{BouajjaniJNT00}
Bouajjani, A., Jonsson, B., Nilsson, M., Touili, T.: Regular model checking.
  In: {CAV}. Lecture Notes in Computer Science, vol.~1855, pp. 403--418.
  Springer (2000)

\bibitem{burrows2006chubby}
Burrows, M.: {The Chubby Lock Service for Loosely-coupled Distributed Systems}.
  In: Proceedings of the 7th symposium on Operating systems design and
  implementation. pp. 335--350. USENIX Association (2006)

\bibitem{CTTV04}
Clarke, E.M., Talupur, M., Touili, T., Veith, H.: {Verification by Network
  Decomposition}. In: Gardner, P., Yoshida, N. (eds.) {CONCUR}. Lecture Notes
  in Computer Science, vol.~3170, pp. 276--291. Springer (2004)

\bibitem{ClarkeTV06}
Clarke, E.M., Talupur, M., Veith, H.: Environment abstraction for parameterized
  verification. In: {VMCAI}. Lecture Notes in Computer Science, vol.~3855, pp.
  126--141. Springer (2006)

\bibitem{damian.communicationclosed.CAV.2019}
Damian, A., Dragoi, C., Militaru, A., Widder, J.: {Communication-closed
  Asynchronous Protocols}. In: International Conference on Computer Aided
  Verification (2019)

\bibitem{Damm.AutomaticCompositionalSynthesis.X.2014}
Damm, W., Finkbeiner, B.: {Automatic Compositional Synthesis of Distributed
  Systems}. In: International Symposium on Formal Methods. pp. 179--193.
  Springer (2014)

\bibitem{DelzannoRB02}
Delzanno, G., Raskin, J., Begin, L.V.: {Towards the Automated Verification of
  Multithreaded Java Programs}. In: {TACAS}. Lecture Notes in Computer Science,
  vol.~2280, pp. 173--187. Springer (2002)

\bibitem{DelzannoSTZ12}
Delzanno, G., Sangnier, A., Traverso, R., Zavattaro, G.: {On the Complexity of
  Parameterized Reachability in Reconfigurable Broadcast Networks}. In:
  D'Souza, D., Kavitha, T., Radhakrishnan, J. (eds.) {IARCS} Annual Conference
  on Foundations of Software Technology and Theoretical Computer Science,
  {FSTTCS} 2012, December 15-17, 2012, Hyderabad, India. LIPIcs, vol.~18, pp.
  289--300. Schloss Dagstuhl - Leibniz-Zentrum fuer Informatik (2012)

\bibitem{Desai.DRONAFrameworkSafe.X.2017}
Desai, A., Saha, I., Yang, J., Qadeer, S., Seshia, S.A.: {DRONA: A Framework
  for Safe Distributed Mobile Robotics}. In: Proceedings of the 8th
  International Conference on Cyber-Physical Systems. pp. 239--248. ICCPS '17,
  ACM (2017)

\bibitem{EK00}
Emerson, E.A., Kahlon, V.: {Reducing Model Checking of the Many to the Few}.
  In: McAllester, D.A. (ed.) CADE. Lecture Notes in Computer Science,
  vol.~1831, pp. 236--254. Springer (2000)

\bibitem{EK03a}
Emerson, E.A., Kahlon, V.: {Exact and Efficient Verification of Parameterized
  Cache CoherenceProtocols}. In: {CHARME}. Lecture Notes in Computer Science,
  vol.~2860, pp. 247--262. Springer (2003)

\bibitem{emerson2003model}
Emerson, E.A., Kahlon, V.: {Model Checking Guarded Protocols}. In: 18th {IEEE}
  Symposium on Logic in Computer Science {(LICS} 2003), 22-25 June 2003,
  Ottawa, Canada, Proceedings. pp. 361--370. {IEEE} Computer Society (2003)

\bibitem{EK03b}
Emerson, E.A., Kahlon, V.: {Rapid Parameterized Model Checking of Snoopy Cache
  Coherence Protocols}. In: {TACAS}. Lecture Notes in Computer Science,
  vol.~2619, pp. 144--159. Springer (2003)

\bibitem{Emerson.PMCP_UndecidableNeatPaper.POPL.1995}
Emerson, E.A., Namjoshi, K.S.: {Reasoning About Rings}. In: Proceedings of the
  22Nd ACM SIGPLAN-SIGACT Symposium on Principles of Programming Languages. pp.
  85--94. POPL '95, ACM (1995)

\bibitem{EN03}
Emerson, E.A., Namjoshi, K.S.: {On Reasoning About Rings}. Int. J. Found.
  Comput. Sci.  \textbf{14}(4),  527--550 (2003)

\bibitem{EmersonT99}
Emerson, E.A., Trefler, R.J.: {From Asymmetry to Full Symmetry: New Techniques
  for Symmetry Reduction in Model Checking}. In: Pierre, L., Kropf, T. (eds.)
  Correct Hardware Design and Verification Methods, 10th {IFIP} {WG} 10.5
  Advanced Research Working Conference, {CHARME} '99, Bad Herrenalb, Germany,
  September 27-29, 1999, Proceedings. Lecture Notes in Computer Science,
  vol.~1703, pp. 142--156. Springer (1999)

\bibitem{Esparza16}
Esparza, J.: {Parameterized Verification of Crowds of Anonymous Processes}. In:
  Esparza, J., Grumberg, O., Sickert, S. (eds.) Dependable Software Systems
  Engineering, {NATO} Science for Peace and Security Series - {D:} Information
  and Communication Security, vol.~45, pp. 59--71. {IOS} Press (2016)

\bibitem{ED13}
Esparza, J., Desel, J.: {On Negotiation As Concurrency Primitive}. In:
  D'Argenio, P.R., Melgratti, H.C. (eds.) {CONCUR}. Lecture Notes in Computer
  Science, vol.~8052, pp. 440--454. Springer (2013)

\bibitem{EsparzaFM99}
Esparza, J., Finkel, A., Mayr, R.: {On the Verification of Broadcast
  Protocols}. In: 14th Annual {IEEE} Symposium on Logic in Computer Science,
  Trento, Italy, July 2-5, 1999. pp. 352--359. {IEEE} Computer Society (1999)

\bibitem{EsparzaN94}
Esparza, J., Nielsen, M.: Decidability issues for petri nets - a survey.
  Bulletin of the {EATCS}  \textbf{52},  244--262 (1994)

\bibitem{Finkel87}
Finkel, A.: {A Generalization of the Procedure of Karp and Miller to Well
  Structured Transition Systems}. In: {ICALP}. Lecture Notes in Computer
  Science, vol.~267, pp. 499--508. Springer (1987)

\bibitem{FS01}
Finkel, A., Schnoebelen, P.: {Well-structured Transition Systems Everywhere!}
  Theor. Comput. Sci.  \textbf{256}(1-2),  63--92 (2001)

\bibitem{GS92}
German, S.M., Sistla, A.P.: {Reasoning About Systems with Many Processes}. J.
  {ACM}  \textbf{39}(3),  675--735 (1992)

\bibitem{GhilardiR10}
Ghilardi, S., Ranise, S.: Backward reachability of array-based systems by {SMT}
  solving: Termination and invariant synthesis. Logical Methods in Computer
  Science  \textbf{6}(4) (2010)

\bibitem{Gleissenthall.Pretend.Synchrony.POPL.2019}
v.~Gleissenthall, K., Kici, R.G., Bakst, A., Stefan, D., Jhala, R.: {Pretend
  Synchrony: Synchronous Verification of Asynchronous Distributed Programs}.
  Proc. ACM Program. Lang.  \textbf{3}(POPL),  59:1--59:30 (2019)

\bibitem{GurfinkelSM16}
Gurfinkel, A., Shoham, S., Meshman, Y.: Smt-based verification of parameterized
  systems. In: {SIGSOFT} {FSE}. pp. 338--348. {ACM} (2016)

\bibitem{Hawblitzel.Ironfleet.SOSP.2015}
Hawblitzel, C., Howell, J., Kapritsos, M., Lorch, J.R., Parno, B., Roberts,
  M.L., Setty, S., Zill, B.: {IronFleet: Proving Practical Distributed Systems
  Correct}. In: Proceedings of the 25th Symposium on Operating Systems
  Principles. pp. 1--17. SOSP '15, ACM (2015)

\bibitem{choosepaper.arxiv}
Jaber, N., Jacobs, S., Wagner, C., Kulkarni, M., Samanta, R.: {Parameterized
  Reasoning for Distributed Systems with Consensus}. arXiv arXiv:2004.04613
  (2020)

\bibitem{Jacobs.ParameterizedSynthesis.X.2012}
Jacobs, S., Bloem, R.: {Parameterized Synthesis}. Logical Methods in Computer
  Science  \textbf{10}(1) (2014)

\bibitem{Jacobs.AnalyzingGuardedProtocols.X.2018}
Jacobs, S., Sakr, M.: {Analyzing Guarded Protocols: Better Cutoffs, More
  Systems, More Expressivity}. In: International Conference on Verification,
  Model Checking, and Abstract Interpretation. pp. 247--268. Springer (2018)

\bibitem{KaiserKW10}
Kaiser, A., Kroening, D., Wahl, T.: {Dynamic Cutoff Detection in Parameterized
  Concurrent Programs}. In: Touili, T., Cook, B., Jackson, P.B. (eds.) Computer
  Aided Verification, 22nd International Conference, {CAV} 2010, Edinburgh, UK,
  July 15-19, 2010. Proceedings. Lecture Notes in Computer Science, vol.~6174,
  pp. 645--659. Springer (2010)

\bibitem{Konnov.ShortCounterexampleProperty.X.2017}
Konnov, I., Lazi{\'c}, M., Veith, H., Widder, J.: {A Short Counterexample
  Property for Safety and Liveness Verification of Fault-tolerant Distributed
  Algorithms}. ACM SIGPLAN Notices  \textbf{52}(1),  719--734 (2017)

\bibitem{KurshanM95}
Kurshan, R.P., McMillan, K.L.: A structural induction theorem for processes.
  Inf. Comput.  \textbf{117}(1),  1--11 (1995)

\bibitem{Maric.CutoffBoundsConsensus.X.2017}
Mari{\'c}, O., Sprenger, C., Basin, D.: {Cutoff Bounds for Consensus
  Algorithms}. In: International Conference on Computer Aided Verification. pp.
  217--237. Springer (2017)

\bibitem{Namjoshi07}
Namjoshi, K.S.: Symmetry and completeness in the analysis of parameterized
  systems. In: {VMCAI}. Lecture Notes in Computer Science, vol.~4349, pp.
  299--313. Springer (2007)

\bibitem{Padon.IvySafetyVerification.PLDI.2016}
Padon, O., McMillan, K.L., Panda, A., Sagiv, M., Shoham, S.: {Ivy: Safety
  Verification by Interactive Generalization}. In: Proceedings of the 37th ACM
  SIGPLAN Conference on Programming Language Design and Implementation. pp.
  614--630. PLDI '16, ACM (2016)

\bibitem{PnueliRZ01}
Pnueli, A., Ruah, S., Zuck, L.D.: {Automatic Deductive Verification with
  Invisible Invariants}. In: Margaria, T., Yi, W. (eds.) {TACAS}. Lecture Notes
  in Computer Science, vol.~2031, pp. 82--97. Springer (2001)

\bibitem{radisref}
{Redis}, \url{https://redis.io/}

\bibitem{Reisig13}
Reisig, W.: Understanding Petri Nets - Modeling Techniques, Analysis Methods,
  Case Studies. Springer (2013). \doi{10.1007/978-3-642-33278-4},
  \url{https://doi.org/10.1007/978-3-642-33278-4}

\bibitem{Scalas.VerifyingMessagePassing.PLDI.2019}
Scalas, A., Yoshida, N., Benussi, E.: {Verifying Message-passing Programs with
  Dependent Behavioural Types}. In: Proceedings of the 40th ACM SIGPLAN
  Conference on Programming Language Design and Implementation. pp. 502--516.
  PLDI 2019, ACM (2019)

\bibitem{SchmitzS13}
Schmitz, S., Schnoebelen, P.: The power of well-structured systems. In:
  {CONCUR}. Lecture Notes in Computer Science, vol.~8052, pp. 5--24. Springer
  (2013). \doi{10.1007/978-3-642-40184-8\_2}

\bibitem{Sergey.ProgrammingProvingDistributed.POPL.2017}
Sergey, I., Wilcox, J.R., Tatlock, Z.: {Programming and Proving with
  Distributed Protocols}. Proc. ACM Program. Lang.  \textbf{2}(POPL),
  28:1--28:30 (Dec 2017)

\bibitem{satsref}
{NASA - Small Aircraft Transportation System},
  \url{https://www.nasa.gov/centers/langley/news/factsheets/SATS.html}

\bibitem{Suzuki.PMCP_UndecidableFirstPPR.1988.PPR}
Suzuki, I.: {Proving Properties of a Ring of Finite-state Machines}. Inf.
  Process. Lett.  \textbf{28}(4),  213--214 (Jul 1988)

\bibitem{Wilox.Verdi.PLDI.2015}
Wilcox, J.R., Woos, D., Panchekha, P., Tatlock, Z., Wang, X., Ernst, M.D.,
  Anderson, T.: {Verdi: A Framework for Implementing and Formally Verifying
  Distributed Systems}. In: Proceedings of the 36th ACM SIGPLAN Conference on
  Programming Language Design and Implementation. pp. 357--368. PLDI '15, ACM
  (2015)

\bibitem{WolperL89}
Wolper, P., Lovinfosse, V.: Verifying properties of large sets of processes
  with network invariants. In: Automatic Verification Methods for Finite State
  Systems. Lecture Notes in Computer Science, vol.~407, pp. 68--80. Springer
  (1989)

\end{thebibliography}

\newpage{}
% Appendix
\appendix
\section{\hlm Details}\label{app:pvapp}

\subsection{The Environment Process in \gbc Protocols}
\label{app:gbc-env}
While we do not consider an environment process in the \hlm, all of our results extend to the case with a distinguished process (or even multiple processes) that is not replicated. 
 In fact, in the \hlm a system of the form $P_1 \ \| \ \ldots\ \| \ P_{n} \ \| \ E$ can be simulated by a system of the form $P_1' \ \| \ \ldots\ \| \ P_{n}' \ \| \ P_{n+1}'$ by letting $P'$ be a process that starts in an initial state where only a broadcast (on a fresh broadcast action) is possible, where the sender moves to the initial state of $E$ and all receivers move to the initial state of $P$. Moreover, the environment process is trivially well-behaved under the assumption that guards of regular processes do not depend on the state of the environment process, and that transitions of the environment process are not guarded.

\end{document}